\renewcommand\eqref[1]{(\ref{#1})} 
\numberwithin{equation}{section}
\theoremstyle{plain}
\newtheorem{thm}{Theorem}[section]
\newtheorem{prop}[thm]{Proposition}
\newtheorem{cor}[thm]{Corollary}
\newtheorem{lem}[thm]{Lemma}
\theoremstyle{definition}
\newtheorem{rem}[thm]{Remark}
\newtheorem{ex}[thm]{Example}
\renewcommand{\wp}{\mathfrak S}
\newcommand{\Rn}{\mathbb R^{n}}
\begin{document}

   \title[Uncertainty relations on nilpotent Lie groups]
{Uncertainty relations on nilpotent Lie groups}

\author[Michael Ruzhansky]{Michael Ruzhansky}
\address{
  Michael Ruzhansky:
  \endgraf
  Department of Mathematics
  \endgraf
  Imperial College London
  \endgraf
  180 Queen's Gate, London SW7 2AZ
  \endgraf
  United Kingdom
  \endgraf
  {\it E-mail address} {\rm m.ruzhansky@imperial.ac.uk}
  }
\author[Durvudkhan Suragan]{Durvudkhan Suragan}
\address{
  Durvudkhan Suragan:
  \endgraf
  Institute of Mathematics and Mathematical Modelling
  \endgraf
  125 Pushkin str.
  \endgraf
  050010 Almaty
  \endgraf
  Kazakhstan
  \endgraf
  and
  \endgraf
  Department of Mathematics
  \endgraf
  Imperial College London
  \endgraf
  180 Queen's Gate, London SW7 2AZ
  \endgraf
  United Kingdom
  \endgraf
  {\it E-mail address} {\rm d.suragan@imperial.ac.uk}
  }

\thanks{The authors were supported in parts by the EPSRC
 grant EP/K039407/1 and by the Leverhulme Grant RPG-2014-02,
 as well as by the MESRK grant 5127/GF4.
 No new data was collected or generated during the course of research.}

     \keywords{nilpotent Lie group, homogeneous Lie group,
     uncertainty inequalities}
     \subjclass[2010]{81S99, 22E30, 46C99}

     \begin{abstract}
     We give relations between main operators of quantum mechanics on one of most general classes of nilpotent Lie groups. Namely, we show relations between momentum and position operators as well as Euler and Coulomb potential operators  on homogeneous groups. Homogeneous group analogues of some well-known inequalities such as Hardy's inequality, Heisenberg-Kennard type and Heisen\-berg-Pauli-Weyl type uncertainty inequalities, as well as Caffarelli-Kohn-Niren\-berg inequality are derived, with best constants. The obtained relations yield new results already in the setting of both isotropic and anisotropic $\mathbb R^{n}$, and of the Heisenberg group. 
     \end{abstract}
    \dedicatory{``Mathematics, rightly viewed, possesses not only truth, but supreme beauty'' Bertrand Russell}
     \maketitle

\section{Introduction}
The uncertainty principle of Werner Heisenberg \cite{Heisenberg} forms a fundamental element of quantum mechanics. It is worth noting that his original argument, while conceptually
enlightening, was heuristic. The first rigorously proven uncertainty relation for position and momentum operators is due to Earle Kennard \cite{Kennard}. Meanwhile, more mathematical details were provided by Wolfgang Pauli and Hermann Weyl. The interpretation of uncertainty inequalities as spectral properties of differential operators is widely present in the literature starting from studies of Charles Fefferman \cite{Fef83} and \cite{FP81}.
Nowadays there is vast literature on uncertainty relations and their applications. We refer to a recent review article \cite{CBTW:review} for further discussions and references on this subject as well as to \cite{FS97} for an overview of the history and the relevance of this type of inequalities from a pure mathematical point of view.

\smallskip
The main aim of this note is to obtain uncertainty type relations on homogeneous (Lie) groups.  
The setting of homogeneous groups was developed by Folland and Stein in \cite{FS-Hardy}, in particular, to distill those results of harmonic analysis that depend only on the group and 
dilation structures. It turns out that
the class of homogeneous groups is one of most general subclasses
of nilpotent Lie groups and, in fact, it is often a working assumption when one is dealing with nilpotent Lie groups.
The Euclidean group $(\mathbb{R}^{n}; +)$, Heisenberg type groups, homogeneous Carnot groups, stratified Lie groups, graded Lie groups are all special cases of the homogeneous groups.
An example of a (nine-dimensional) nilpotent Lie group that does not allow for any compatible family of dilations was constructed by Dyer \cite{Dyer-1970}.
In particular, {\em $\mathbb{R}^{n}$ with an anisotropic dilation structure} gives an
important example of a homogeneous group, and the results of this note appear to be new already in this setting.
The language of Folland and Stein's homogeneous groups is, however, very convenient, as it allows for a perfect level of abstraction to make an exposition more clear.

\smallskip
The starting point of our analysis are the abstract position and momentum operators $\mathcal{P}$ and $\mathcal{M}$, which we assume to be linear operators, densely defined on $L^{2}$, with their domains containing $C_{0}^{\infty}$, and such that $C_{0}^{\infty}$ is an invariant subspace for them.
The main (and only) assumption in this paper is that $\mathcal{P}$ and $\mathcal{M}$ satisfy the relations
\begin{equation}\label{PMEfi}
2\,{\rm Re}\left(\mathcal{P}f\overline{i\mathcal{M}f}\right)
=(\mathcal{P}\circ(i\mathcal{M}))|f|^{2}=E|f|^{2}
\end{equation}
for all $f\in C_{0}^{\infty}$. The operator $E$ is a given operator, {\em the Euler operator} of the space (see e.g. \eqref{EQ:def-Euler}), so that the position and momentum operators give its factorisation as in the second equality in \eqref{PMEfi}.
The Euler operator $E$ characterises the homogeneity property: a differentiable function $f$ satisfies 
$$f(\lambda x)=\lambda^{\mu}f(x) \textrm{ for all } \lambda>0
\textrm{ if and only if } Ef=\mu f.$$
Interestingly, these relations are enough to derive properties of uncertainty relation type,
such as such as Heisenberg-Kennard and Heisen\-berg-Pauli-Weyl type uncertainty inequalities.
The property that $\mathcal{P}$ and $i\mathcal{M}$ factorise the Euler operator also allows one to  establish links between them and other operators such as the radial operator, the dilations generating operator, and the Coulomb potential operator, and prove some equalities and inequalities among them.

It seems to us a revealing fact that once operators $\mathcal{P}$ and $i\mathcal{M}$ factorise the Euler operator and satisfy the additional relation in the first equality in \eqref{PMEfi}, they must satisfy an uncertainty principle.

If the space is the Euclidean $\mathbb R^{n}$ with isotropic (standard) dilations, then 
the operators 
\begin{equation}\label{EQ:pmex1}
\mathcal{P}:=x  \textrm{ and } \mathcal{M}:=-i\nabla,
\end{equation}
i.e. the multiplication and the gradient (multiplied by $-i$) operators, satisfy \eqref{PMEfi}.
The same will hold on general homogeneous Lie groups, as we show in Example \ref{REM:pm}.

However, one can find other examples which satisfy \eqref{PMEfi}, for instance (see Example \ref{REM:pm1}), 
if $\{X_{j}\}$ is a basis of a Lie algebra $g$ of some homogeneous group $\mathbb{G}$ and ${\exp}_{\mathbb{G}}^{-1}(x)=\sum_{j=1}^{n}e_{j}(x)X_{j}$ (see \eqref{expo}), then the position operators can be defined as $f \mapsto e_{j}(x)f$, and the momentum operators can be defined as $f\mapsto\nu_{j}X_{j}f$, where $\nu_{j}$ is the homogeneous degree of $X_{j}$. 

\smallskip
Let us now very briefly review the main concepts of the homogeneous groups. For the general background details on homogeneous groups we refer to the book \cite{FS-Hardy} by Folland and Stein as well as to the recent monograph \cite{FR} by V. Fischer and the first named author.

If a Lie group (on $\mathbb{R}^{n}$) $\mathbb{G}$ has a property that
there exist $n$-real numbers $\nu_{1},\ldots, \nu_{n}$
such that the dilation
$$D_{\lambda}(x):=(\lambda^{\nu_{1}}x_{1},\ldots,\lambda^{\nu_{n}}x_{n}),\; D_{\lambda}:\mathbb{R}^{n}\rightarrow\mathbb{R}^{n},$$
is an automorphism of the group $\mathbb{G}$ for each $\lambda>0,$
then it is called a {\em homogeneous (Lie) group}.

A {\em homogeneous quasi-norm} on a homogeneous group $\mathbb G$ is
a continuous non-negative function
$$\mathbb{G}\ni x\mapsto |x|\in [0,\infty),$$
satisfying the properties

\begin{itemize}
\item   $|x^{-1}| = |x|$ for all $x\in \mathbb{G}$,
\item  $|D_{\lambda}(x)|=\lambda |x|$ for all
$x\in \mathbb{G}$ and $\lambda >0$,
\item  $|x|= 0$ if and only if $x=0$.
\end{itemize}

Let $dx$ denote the Haar measure on $\mathbb{G}$ and let $|S|$ denote the corresponding volume of a measurable set $S\subset \mathbb{G}$.
Then we have
\begin{equation}
|D_{\lambda}(S)|=\lambda^{Q}|S| \quad {\rm and}\quad \int_{\mathbb{G}}f(D_{\lambda}(x))
dx=\lambda^{-Q}\int_{\mathbb{G}}f(x)dx.
\end{equation}
A family of dilations of a Lie algebra $\mathfrak{g}$
is a family of linear mappings of the form
$$D_{\lambda}={\rm Exp}(A \,{\rm ln}\lambda)=\sum_{k=0}^{\infty}
\frac{1}{k!}({\rm ln}(\lambda) A)^{k},$$
where $A$ is a diagonalisable linear operator on $\mathfrak{g}$
with positive eigenvalues,
and each $D_{\lambda}$ is a morphism of the Lie algebra $\mathfrak{g}$,
that is, a linear mapping
from $\mathfrak{g}$ to itself which respects the Lie bracket:
$$\forall X,Y\in \mathfrak{g},\, \lambda>0,\;
[D_{\lambda}X, D_{\lambda}Y]=D_{\lambda}[X,Y].$$
Let us fix a basis $\{X_{1},\ldots,X_{n}\}$ of the Lie algebra $\mathfrak{g}$ of the homogeneous group $\mathbb{G}$
such that
$$AX_{k}=\nu_{k}X_{k}$$
for each $1\leq k\leq n$, so that $A$ can be taken to be
\begin{equation}\label{EQ:mA0}
A={\rm diag}(\nu_{1},\ldots,\nu_{n}).
\end{equation}
Then each $X_{k}$ is homogeneous of degree $\nu_{k}$ and also
$$
Q=\nu_{1}+\cdots+\nu_{n},
$$
which is called a homogeneous dimension of $\mathbb{G}$.  Homogeneous groups are necessarily nilpotent and hence, in particular, the exponential mapping $\exp_{\mathbb G}:\mathfrak g\to\mathbb G$ is a global diffeomorphism, where $\mathfrak{g}$ is the Lie algebra of $\mathbb{G}$.
The decomposition of ${\exp}_{\mathbb{G}}^{-1}(x)$ in the Lie algebra $\mathfrak g$ defines the vector
$$e(x)=(e_{1}(x),\ldots,e_{n}(x))$$
by the formula
\begin{equation}\label{expo}
{\exp}_{\mathbb{G}}^{-1}(x)=e(x)\cdot \nabla_{X}\equiv\sum_{j=1}^{n}e_{j}(x)X_{j},
\end{equation}
where $\nabla_{X}=(X_{1},\ldots,X_{n})$.
Alternatively, this means the equality
$$x={\exp}_{\mathbb{G}}\left(e_{1}(x)X_{1}+\ldots+e_{n}(x)X_{n}\right).$$
By homogeneity this implies
$$rx:=D_{r}(x)={\exp}_{\mathbb{G}}\left(r^{\nu_{1}}e_{1}(x)X_{1}+\ldots
+r^{\nu_{n}}e_{n}(x)X_{n}\right),$$
that is,
$$
e(rx)=(r^{\nu_{1}}e_{1}(x),\ldots,r^{\nu_{n}}e_{n}(x)).
$$
We define
\begin{equation}\label{EQ:Euler}
\mathcal{R} :=\frac{d}{dr},
\end{equation}
that is, for all $x\in \mathbb G$ 
\begin{equation}\label{dfdr}
\frac{d}{d|x|}f(x)=\mathcal{R}f(x),
\end{equation}
for each homogeneous quasi-norm $|x|$ on a homogeneous group $\mathbb G.$
Defining the Euler operator
\begin{equation}\label{EQ:def-Euler}
E=|x|\mathcal{R},
\end{equation}
it is not difficult to see that $E$ is homogeneous of order zero (see e.g. \cite{Ruzhansky-Suragan:identities}).
Since $\mathbb{G}$ is a general homogeneous group, it does not have to be stratified or
even graded. Therefore, the notion of a horizontal gradient does not make
sense, and hence it is natural to work with the operator $\mathcal{R}$.
For the horizontal versions of functional inequalities such as Hardy, Rellich, and Caffarelli-Kohn-Nirenberg inequalities we refer to \cite{Ruzhansky-Suragan:horizontal} and references therein.

We also refer to recent works of T. Ozawa et al \cite{IIO}, \cite{MOW:Hardy-Hayashi}, \cite{OY-uncertainty} for some of the abelian cases $\mathbb{G}\equiv \mathbb{R}^{n}$ of our discussions in the case of the (standard) isotropic dilations and the Euclidean distance $\|\cdot\|$. We note that {\em also in the abelian (both isotropic and anisotropic) cases of
$\mathbb{R}^{n}$ our results provide new insights in view of the arbitrariness of the homogeneous quasi-norm $|\cdot|$} which does not necessarily have to be the Euclidean norm.

In Section \ref{SEC:2} we give main results and their proofs.

\section{Main results and proofs}
\label{SEC:2}

Let us start by giving an example of position and momentum operators appearing as a special case of operators $\mathcal{P}$ and $\mathcal{M}$ used in this paper. First we give an example on general homogeneous groups, and then another example on the Heisenberg group.

\begin{ex}\label{REM:pm}
Let $\mathbb{G}$ be a homogeneous Lie group.
Let us define position and momentum operators as 
\begin{equation}\label{EQ:pmex1}
\mathcal{P}:=x,\, x\in\mathbb{G}, \textrm{ and } \mathcal{M}:=-i\nabla,
\end{equation}
where $\nabla=(\frac{\partial}{\partial x_{1}},\ldots, \frac{\partial}{\partial x_{n}})$ is an anisotropic gradient on $\mathbb{G}$ consisting of partial derivatives with respect to coordinate functions. 
Here we understand $\mathcal{P}$ as the scalar multiplication by the coordinates of the variable $x$, i.e. $\mathcal{P} v=\sum x_{j} v_{j},$ where $x_{j}$ are the coordinate functions of $x\in\mathbb{G}$, see \cite[Section 3.1.4]{FR} for the detailed discussion of these functions on homogeneous groups.

The operators $\mathcal{P}$ and $\mathcal{M}$ clearly satisfy the relation
\begin{equation}\label{EQ:pmex2}
2\,{\rm Re}\left(xf\cdot\nabla f\right)
=x\cdot\nabla|f|^{2}=E|f|^{2},
\end{equation}
where $E$ is the Euler operator from \eqref{EQ:def-Euler}, that is, 
$$E=x\cdot\nabla \textrm{ and } \mathcal{R}=\frac{x\cdot\nabla}{|x|}=\frac{d}{d|x|}.$$
Although $x_{j}$ and $\frac{\partial}{\partial x_{j}}$ may have different degrees of homogeneity depending on $j$, the Euler operator $E$ is homogeneous of order zero.
The last equality can be checked directly:
$$\frac{d}{d|x|}f(x)=\frac{d}{d|x|}f\left(\frac{x}{|x|}|x|\right)=\frac{x}{|x|}\frac{d}{dx}f(x)=\frac{x\cdot\nabla}{|x|}f(x),$$
for any $x\in\mathbb{G}$ and differentiable function $f$. Here we have used the fact that $\frac{x}{|x|}$ does not depend on $|x|$.
In the notation \eqref{EQ:pmex1} the relations \eqref{EQ:pmex2} can be expressed as
\begin{equation}\label{PMEfex3}
2\,{\rm Re}\left(\mathcal{P}f\overline{i\mathcal{M}f}\right)
=(\mathcal{P}\circ(i\mathcal{M}))|f|^{2}=E|f|^{2}.
\end{equation}
\end{ex}	

We note that the anisotropic gradient $\nabla$ can be expressed in terms of the left-invariant group gradient $\nabla_{X}=(X_{1},\ldots,X_{n})$. Such relations are well known and can we written as
$$
\frac{\partial}{\partial x_{j}}=X_{j}+\sum_{\stackrel{1\leq k\leq n}{\nu_{j}<\nu_{k}}} p_{j,k} X_{k},
$$
for some homogeneous polynomials $p_{j,k}$ on $\mathbb{G}$ of homogeneous degree $\nu_{k}-\nu_{j}>0$, see e.g. \cite[Section 3.1.5]{FR}.

\begin{ex}\label{REM:pm1}
Consider the Heisenberg group $\mathbb{H}$ on $\mathbb R^{3}$. As discussed in the introduction,
the exponential map of the group is globally invertible and its inverse map is given by the formula
\begin{equation}\label{expo}
{\exp}_{\mathbb{H}}^{-1}(x)=e(x)\cdot \nabla_{X}\equiv\sum_{j=1}^{3}e_{j}(x)X_{j},
\end{equation}
where $\nabla_{X}=(X_{1},X_{2},X_{3})$ is the full gradient of $\mathbb{H}$ with
$X_{1}=\frac{\partial}{\partial x_{1}}+2x_{2}\frac{\partial}{\partial x_{3}},$
$X_{2}=\frac{\partial}{\partial x_{2}}-2x_{1}\frac{\partial}{\partial x_{3}},$
and $X_{3}=-4\frac{\partial}{\partial x_{3}}$ as well as $e(x)=(e_{1}(x),e_{2}(x),e_{3}(x))$ with $e_{1}(x)=x_{1}$, $e_{2}(x)=x_{2}$
and $e_{3}(x)=-\frac{1}{4}x_{3}$.
In this case, the position and momentum operators can be defined as 
\begin{equation}\label{EQ:pmex1}
\mathcal{P}:=e(x),\, x\in\mathbb{G}, \textrm{ and } \mathcal{M}:=-i\nabla_{X}.
\end{equation}
It is clear that these operators satisfy the relations \eqref{PMEfex3}.
Now let us check the relation \eqref{EQ:def-Euler} between the Euler operator $E_{\mathbb{H}}:=e(x)\cdot\nabla_{X}$ and the radial operator $\mathcal{R}_{\mathbb{H}}=\frac{d}{d|x|}$:
\begin{align*}E_{\mathbb{H}}=e(x)\cdot\nabla_{X} & =x_{1}\left(\frac{\partial}{\partial x_{1}}+2x_{2}\frac{\partial}{\partial x_{3}}\right)+x_{2}\left(\frac{\partial}{\partial x_{2}}-2x_{1}\frac{\partial}{\partial x_{3}}\right)-\frac{1}{4}x_{3} \left(-4\frac{\partial}{\partial x_{3}}\right)
\\ & = x_{1}\frac{\partial}{\partial x_{1}}+x_{2}\frac{\partial}{\partial x_{2}}+
x_{3}\frac{\partial}{\partial x_{3}}
\\ & = |x|\left( \frac{x_{1}}{|x|}\frac{\partial}{\partial x_{1}}+ \frac{x_{2}}{|x|}\frac{\partial}{\partial x_{2}}+
\frac{x_{3}}{|x|}\frac{\partial}{\partial x_{3}}\right)
\\ & = |x|\frac{d}{d|x|}=|x|\mathcal{R}_{\mathbb{H}}.
\end{align*}
  
\end{ex}	

\subsection{Assumptions of this paper}

In this paper, in particular, 
we show relations between abstract position $\mathcal{P}$ and momentum $\mathcal{M}$ operators on homogeneous groups. 
These will be the operators providing a suitable factorisation for the Euler operator motivated by the relations \eqref{PMEfex3}. Although we could have worked specifically with operators
$\mathcal{P}$ and $\mathcal{M}$ from Example \ref{REM:pm}, it is good to emphasise exactly which of their properties we need to obtain the uncertainty principles and other functional relations.
However, we would like to emphasise that in the setting of homogeneous groups and already in the anisotropic $\mathbb R^{n}$ the subsequent results are new also for operators from 
Example \ref{REM:pm}, and also in the (usual) isotropic $\Rn$ in view of an arbitrary choice of a homogeneous quasi-norm $|\cdot|$.

\smallskip
Thus, from now on, let $\mathcal{P}$ and $\mathcal{M}$ be linear operators, densely defined on $L^{2}(\mathbb{G})$, with their domains containing $C_{0}^{\infty}(\mathbb{G})$, and such that $C_{0}^{\infty}(\mathbb{G})$ is an invariant subspace for them, that is,
$\mathcal{P}(C_{0}^{\infty}(\mathbb{G}))\subset C_{0}^{\infty}(\mathbb{G})$
and 
$\mathcal{M}(C_{0}^{\infty}(\mathbb{G}))\subset C_{0}^{\infty}(\mathbb{G})$.
The main (and only) assumption in this paper is that $\mathcal{P}$ and $\mathcal{M}$ satisfy the relations
\begin{equation}\label{PMEf}
2\,{\rm Re}\left(\mathcal{P}f\overline{i\mathcal{M}f}\right)
=(\mathcal{P}\circ(i\mathcal{M}))|f|^{2}=E|f|^{2}
\end{equation}
for all $f\in C_{0}^{\infty}(\mathbb{G})$.

In particular, in view of equalities \eqref{PMEfex3} in Example \ref{REM:pm}, it is satisfied by the operators 
$\mathcal{P}$ and $\mathcal{M}$ given in \eqref{EQ:pmex1}. 
However, surprisingly, we do not need their precise expressions from \eqref{EQ:pmex1} to derive subsequent properties presented in this paper: only the relation \eqref{PMEf} is required for our further analysis.

We denote by $D(\mathcal{P})$ and $D(\mathcal{M})$ the domains of operators 
$\mathcal{P}$ and $\mathcal{M}$, respectively.

\subsection{Position-momentum ($\mathcal{P}\mathcal{M}$) relations.} 
In this subsection we 
show relations between abstract position $\mathcal{P}$ and momentum $\mathcal{M}$ operators on homogeneous groups satisfying equalities \eqref{PMEf}.

\begin{thm}\label{Kennardequality}
Let $\mathbb{G}$ be a homogeneous group
of homogeneous dimension $Q$.
Then for every $f\in D(\mathcal{P})\bigcap D(\mathcal{M})$ with $\mathcal{P}f\not\equiv0$ and $\mathcal{M}f\not\equiv0,$
we have
\begin{multline}\label{awH}
\|\mathcal{P}f\|^{2}_{L^{2}(\mathbb{G})}+\|\mathcal{M}f\|^{2}_{L^{2}(\mathbb{G})}
=Q\|f\|^{2}_{L^{2}(\mathbb{G})}+\|\mathcal{P}f-i\mathcal{M}f\|^{2}_{L^{2}(\mathbb{G})}
\\=\|\mathcal{P}f\|_{L^{2}(\mathbb{G})}\|\mathcal{M}f\|_{L^{2}(\mathbb{G})}
\left(2- \left\|\frac{\mathcal{P}f}{\|\mathcal{P}f\|_{L^{2}(\mathbb{G})}}+\frac{i\mathcal{M}f}{\|\mathcal{M}f\|_{L^{2}(\mathbb{G})}}
\right\|^{2}_{L^{2}(\mathbb{G})}\right) \\ +\|\mathcal{P}f+i\mathcal{M}f\|^{2}_{L^{2}(\mathbb{G})}.
\end{multline}
\end{thm}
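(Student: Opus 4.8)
I would prove \eqref{awH} in two steps. The first is purely algebraic. Write $\langle\cdot,\cdot\rangle$ and $\|\cdot\|$ for the inner product and norm of $L^{2}(\mathbb{G})$, and put $a=\mathcal{P}f$, $b=i\mathcal{M}f$, so that $\|b\|=\|\mathcal{M}f\|$. Expanding $\left\|\frac{a}{\|a\|}+\frac{b}{\|b\|}\right\|^{2}$ and $\|a+b\|^{2}$ in the usual way gives
\[
\|a\|^{2}+\|b\|^{2}=\|a\|\,\|b\|\left(2-\left\|\frac{a}{\|a\|}+\frac{b}{\|b\|}\right\|^{2}\right)+\|a+b\|^{2},
\]
which is precisely the equality between the first and the last members of \eqref{awH}; the hypotheses $\mathcal{P}f\not\equiv0$ and $\mathcal{M}f\not\equiv0$ serve here only to make the normalisations meaningful, so this part holds for every admissible $f$. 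Since, in addition, $\|\mathcal{P}f-i\mathcal{M}f\|^{2}=\|\mathcal{P}f\|^{2}+\|\mathcal{M}f\|^{2}-2\,\mathrm{Re}\langle\mathcal{P}f,i\mathcal{M}f\rangle$, it then remains only to evaluate $2\,\mathrm{Re}\langle\mathcal{P}f,i\mathcal{M}f\rangle$ in order to obtain the equality with the middle member of \eqref{awH} as well.

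The second step invokes \eqref{PMEf} and is where $Q$ enters; I would carry it out first for $f\in C_{0}^{\infty}(\mathbb{G})$, where \eqref{PMEf} is available. Integrating the pointwise identity \eqref{PMEf} over $\mathbb{G}$ gives
\[
2\,\mathrm{Re}\langle\mathcal{P}f,i\mathcal{M}f\rangle=\int_{\mathbb{G}}E|f|^{2}\,dx.
\]
To evaluate the right-hand side I would use the polar decomposition on a homogeneous group (see \cite{FS-Hardy}): there is a measure $d\sigma$ on the unit quasi-sphere $\wp=\{x\in\mathbb{G}:|x|=1\}$ with $\int_{\mathbb{G}}g\,dx=\int_{0}^{\infty}\int_{\wp}g(D_{r}(y))\,r^{Q-1}\,d\sigma(y)\,dr$. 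Since $E=|x|\mathcal{R}$ acts along each dilation ray $r\mapsto D_{r}(y)$ simply as $r\,\partial_{r}$, integration by parts in $r$ — the boundary terms vanish because $|f|^{2}\in C_{0}^{\infty}(\mathbb{G})$ and $Q>0$ — yields the divergence identity $\int_{\mathbb{G}}E g\,dx=-Q\int_{\mathbb{G}}g\,dx$, hence $2\,\mathrm{Re}\langle\mathcal{P}f,i\mathcal{M}f\rangle=-Q\|f\|^{2}$. Substituting this into the expansion of $\|\mathcal{P}f-i\mathcal{M}f\|^{2}$ from the first step gives $\|\mathcal{P}f\|^{2}+\|\mathcal{M}f\|^{2}=\|\mathcal{P}f-i\mathcal{M}f\|^{2}-Q\|f\|^{2}$; together with the algebraic identity this proves \eqref{awH} for all $f\in C_{0}^{\infty}(\mathbb{G})$.

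There remains the passage from $C_{0}^{\infty}(\mathbb{G})$ to a general $f\in D(\mathcal{P})\cap D(\mathcal{M})$ with $\mathcal{P}f\not\equiv0$ and $\mathcal{M}f\not\equiv0$: one would choose $f_{k}\in C_{0}^{\infty}(\mathbb{G})$ with $f_{k}\to f$, $\mathcal{P}f_{k}\to\mathcal{P}f$ and $\mathcal{M}f_{k}\to\mathcal{M}f$ in $L^{2}(\mathbb{G})$, and let $k\to\infty$ in each term of \eqref{awH}, using continuity of $\|\cdot\|$ and $\langle\cdot,\cdot\rangle$. I expect this density/closure step to be the main obstacle, since \eqref{PMEf} is assumed only on $C_{0}^{\infty}(\mathbb{G})$ while \eqref{awH} is asserted for all of $D(\mathcal{P})\cap D(\mathcal{M})$; what is really needed is that $C_{0}^{\infty}(\mathbb{G})$ be a common core for $\mathcal{P}$ and $\mathcal{M}$ in the appropriate sense — clear for the concrete operators of Example~\ref{REM:pm}, but worth isolating as a lemma in the abstract framework. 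The remaining ingredients, the Hilbert-space identity and the polar-coordinate integration by parts for $E$, are routine.
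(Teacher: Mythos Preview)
Your argument is essentially the paper's own: compute $-2\,\mathrm{Re}\int_{\mathbb{G}}\mathcal{P}f\,\overline{i\mathcal{M}f}\,dx=Q\|f\|_{L^{2}(\mathbb{G})}^{2}$ via the polar decomposition \eqref{EQ:polar} and integration by parts in $r$, combine with the obvious Hilbert-space expansions, and then pass from $C_{0}^{\infty}(\mathbb{G})$ to $D(\mathcal{P})\cap D(\mathcal{M})$ by density --- the paper handles this last step in one line just as you feared, without isolating a common-core hypothesis. One small point: your displayed identity $\|\mathcal{P}f\|^{2}+\|\mathcal{M}f\|^{2}=\|\mathcal{P}f-i\mathcal{M}f\|^{2}-Q\|f\|^{2}$ is correct and in fact exposes a sign typo in the middle member of \eqref{awH}; the paper's own computation yields $Q\|f\|^{2}+\|\mathcal{P}f+i\mathcal{M}f\|^{2}$ there, which is the same identity you obtained.
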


\begin{proof}[Proof of Theorem \ref{Kennardequality}]
There is a (unique)
positive Borel measure $\sigma$ on the
unit quasi-sphere
\begin{equation}\label{EQ:sphere}
\wp:=\{x\in \mathbb{G}:\,|x|=1\},
\end{equation}
such that for all functions $f\in L^{1}(\mathbb{G})$ we have the polar decomposition
\begin{equation}\label{EQ:polar}
\int_{\mathbb{G}}f(x)dx=\int_{0}^{\infty}
\int_{\wp}f(D_{r}(y))r^{Q-1}d\sigma(y)dr.
\end{equation}
We refer to Folland and Stein \cite{FS-Hardy} for the proof (see also
\cite[Section 3.1.7]{FR}).
Since $C_{0}^{\infty}(\mathbb{G})$ is dense in $L^{2}(\mathbb{G})$, we need to show \eqref{awH} for $f\in C_{0}^{\infty}(\mathbb{G})$ and then this implies that it is also true on
$D(\mathcal{P})\bigcap D(\mathcal{M})$ by density.
Using the above polar decomposition, formula \eqref{dfdr} and equality \eqref{PMEf}, we calculate
\begin{align*}
-2\,{\rm Re}\int_{\mathbb{G}}\mathcal{P}f\overline{i\mathcal{M}f}dx
& =-\int_{\mathbb{G}}\mathcal{P}i\mathcal{M}|f|^{2}dx=-\int_{0}^{\infty}\int_{\wp}
r^{Q} \frac{1}{r}E |f|^{2} d\sigma(y)dr
\\&=-\int_{0}^{\infty}\int_{\wp}
r^{Q} \frac{d|f|^{2}}{dr} d\sigma(y)dr=Q\int_{0}^{\infty}\int_{\wp}
r^{Q-1} |f|^{2}d\sigma(y)dr\\&=Q\int_{\mathbb{G}}|f|^{2}dx=Q\|f\|^{2}_{L^{2}(\mathbb{G})}.
\end{align*}
Combining this with
\begin{equation*}
\|\mathcal{P}f\|^{2}_{L^{2}(\mathbb{G})}+
\|\mathcal{M}f\|^{2}_{L^{2}(\mathbb{G})}
=\|\mathcal{P}f+i\mathcal{M}f\|^{2}_{L^{2}(\mathbb{G})}-2\,{\rm Re}\int_{\mathbb{G}}\mathcal{P}f\overline{i\mathcal{M}f}dx
\end{equation*}
we obtain the first equality in \eqref{awH}.
On the other hand, we have
$$-2\,{\rm Re}\int_{\mathbb{G}}\mathcal{P}f\overline{i\mathcal{M}f}dx
=\|\mathcal{M}f\|_{L^{2}(\mathbb{G})}\|\mathcal{P}f\|_{L^{2}(\mathbb{G})}\left(2- \left\|\frac{\mathcal{P}f}{\|\mathcal{P}f\|_{L^{2}(\mathbb{G})}}+
\frac{i\mathcal{M}f}{\|\mathcal{M}f\|_{L^{2}(\mathbb{G})}}
\right\|^{2}_{L^{2}(\mathbb{G})}\right).$$

This proves the second equality in \eqref{awH}.
\end{proof}

Equalities \eqref{awH} imply the following Heisenberg-Kennard inequality:
\begin{cor}\label{relations1}We have
\begin{equation}\label{Ken}
\frac{Q}{2}\|f\|^{2}_{L^{2}(\mathbb{G})}\leq\|\mathcal{P}f\|_{L^{2}(\mathbb{G})}\|\mathcal{M}f\|_{L^{2}(\mathbb{G})}
,\end{equation}
which is also called the Kennard uncertainty inequality in the abelian case (see e.g. \cite{SZ} and \cite{WM}).
\end{cor}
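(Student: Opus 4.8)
The plan is to read the inequality \eqref{Ken} straight off the chain of equalities \eqref{awH} furnished by Theorem \ref{Kennardequality}, the only extra ingredient being that an $L^{2}$-norm is non-negative. At heart this is just the Cauchy--Schwarz inequality applied to the bilinear identity $-2\,{\rm Re}\int_{\mathbb{G}}\mathcal{P}f\,\overline{i\mathcal{M}f}\,dx=Q\|f\|^{2}_{L^{2}(\mathbb{G})}$ isolated inside the proof of Theorem \ref{Kennardequality}, namely $Q\|f\|^{2}_{L^{2}(\mathbb{G})}\le 2\big|\int_{\mathbb{G}}\mathcal{P}f\,\overline{i\mathcal{M}f}\,dx\big|\le 2\|\mathcal{P}f\|_{L^{2}(\mathbb{G})}\|\mathcal{M}f\|_{L^{2}(\mathbb{G})}$, followed by division by $2$; the role of \eqref{awH} is to make this bound visible through the manifestly non-negative term appearing there.

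Concretely I would argue as follows. First dispose of the degenerate cases $\mathcal{P}f\equiv0$ or $\mathcal{M}f\equiv0$: then the right-hand side of \eqref{Ken} vanishes, while the identity $Q\|f\|^{2}_{L^{2}(\mathbb{G})}=-2\,{\rm Re}\int_{\mathbb{G}}\mathcal{P}f\,\overline{i\mathcal{M}f}\,dx$ from the proof of Theorem \ref{Kennardequality} forces $\|f\|_{L^{2}(\mathbb{G})}=0$ (recall $Q=\nu_{1}+\dots+\nu_{n}>0$), so \eqref{Ken} holds trivially. Assuming now $\mathcal{P}f\not\equiv0$ and $\mathcal{M}f\not\equiv0$, I would use the two evaluations of $-2\,{\rm Re}\int_{\mathbb{G}}\mathcal{P}f\,\overline{i\mathcal{M}f}\,dx$ recorded in the proof of Theorem \ref{Kennardequality} — namely that it equals $Q\|f\|^{2}_{L^{2}(\mathbb{G})}$ and also equals $\|\mathcal{P}f\|_{L^{2}(\mathbb{G})}\|\mathcal{M}f\|_{L^{2}(\mathbb{G})}\big(2-\big\|\frac{\mathcal{P}f}{\|\mathcal{P}f\|_{L^{2}(\mathbb{G})}}+\frac{i\mathcal{M}f}{\|\mathcal{M}f\|_{L^{2}(\mathbb{G})}}\big\|^{2}_{L^{2}(\mathbb{G})}\big)$, which is exactly the content of \eqref{awH} — and equate them to get $Q\|f\|^{2}_{L^{2}(\mathbb{G})}=\|\mathcal{P}f\|_{L^{2}(\mathbb{G})}\|\mathcal{M}f\|_{L^{2}(\mathbb{G})}\big(2-\big\|\frac{\mathcal{P}f}{\|\mathcal{P}f\|_{L^{2}(\mathbb{G})}}+\frac{i\mathcal{M}f}{\|\mathcal{M}f\|_{L^{2}(\mathbb{G})}}\big\|^{2}_{L^{2}(\mathbb{G})}\big)$. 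Since the subtracted $L^{2}$-norm is $\ge 0$, the bracket is at most $2$, so $Q\|f\|^{2}_{L^{2}(\mathbb{G})}\le 2\|\mathcal{P}f\|_{L^{2}(\mathbb{G})}\|\mathcal{M}f\|_{L^{2}(\mathbb{G})}$, which rearranges to \eqref{Ken}. No fresh density step is required, as \eqref{awH} already holds on $D(\mathcal{P})\cap D(\mathcal{M})$ by Theorem \ref{Kennardequality}.

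I do not anticipate a real obstacle: the corollary is a one-line deduction once \eqref{awH} is in hand. The only items needing attention are bookkeeping — checking signs when one passes between $\|\mathcal{P}f\pm i\mathcal{M}f\|^{2}_{L^{2}(\mathbb{G})}$ and the polarisation expansion $\|\mathcal{P}f\pm i\mathcal{M}f\|^{2}_{L^{2}(\mathbb{G})}=\|\mathcal{P}f\|^{2}_{L^{2}(\mathbb{G})}+\|\mathcal{M}f\|^{2}_{L^{2}(\mathbb{G})}\pm 2\,{\rm Re}\int_{\mathbb{G}}\mathcal{P}f\,\overline{i\mathcal{M}f}\,dx$ — and the degenerate cases dealt with above. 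It would also be natural to append a remark that equality in \eqref{Ken} holds precisely when the subtracted norm in \eqref{awH} vanishes, i.e. when $\mathcal{P}f$ and $-\,i\mathcal{M}f$ are positive scalar multiples of one another, and to note that the specialisation $\mathbb{G}=\mathbb{R}^{n}$, $Q=n$, $\mathcal{P}=x$, $\mathcal{M}=-i\nabla$ of Example \ref{REM:pm} returns the classical Heisenberg--Kennard uncertainty bound.
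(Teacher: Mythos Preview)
Your proposal is correct and matches the paper's approach exactly: the paper states the corollary as an immediate consequence of the equalities \eqref{awH} without giving a separate proof, and your argument spells out precisely the intended one-line deduction (drop the non-negative squared norm in the second line of \eqref{awH}). Your handling of the degenerate cases and the equality condition is additional detail the paper defers to Corollary \ref{relations3}.
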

The first equality in \eqref{awH} implies the following Pythagorean type inequality:
\begin{cor}\label{relations2} We have
\begin{equation}
\|\sqrt{Q}f\|^{2}_{L^{2}(\mathbb{G})}\leq \|\mathcal{P}f\|^{2}_{L^{2}(\mathbb{G})}+\|\mathcal{M}f\|^{2}_{L^{2}(\mathbb{G})}
.\end{equation}
\end{cor}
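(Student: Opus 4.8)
The plan is to read off the inequality by discarding a nonnegative term in the Pythagorean-type identity already established in Theorem \ref{Kennardequality}. First I would recall the first equality in \eqref{awH}, namely
\begin{equation*}
\|\mathcal{P}f\|^{2}_{L^{2}(\mathbb{G})}+\|\mathcal{M}f\|^{2}_{L^{2}(\mathbb{G})}
=Q\|f\|^{2}_{L^{2}(\mathbb{G})}+\|\mathcal{P}f-i\mathcal{M}f\|^{2}_{L^{2}(\mathbb{G})},
\end{equation*}
valid for $f\in D(\mathcal{P})\cap D(\mathcal{M})$. Then I would observe that $\|\sqrt{Q}f\|^{2}_{L^{2}(\mathbb{G})}=Q\|f\|^{2}_{L^{2}(\mathbb{G})}$ and that the term $\|\mathcal{P}f-i\mathcal{M}f\|^{2}_{L^{2}(\mathbb{G})}$ is a squared $L^{2}$-norm, hence nonnegative; dropping it yields precisely the claimed bound, with equality exactly when $\mathcal{P}f=i\mathcal{M}f$ in $L^{2}(\mathbb{G})$.

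The only point requiring a brief comment is that Theorem \ref{Kennardequality} is stated under the nondegeneracy hypotheses $\mathcal{P}f\not\equiv 0$ and $\mathcal{M}f\not\equiv 0$, whereas the corollary carries no such restriction. However, those hypotheses enter the proof of Theorem \ref{Kennardequality} only when one normalises by $\|\mathcal{P}f\|_{L^{2}(\mathbb{G})}$ and $\|\mathcal{M}f\|_{L^{2}(\mathbb{G})}$ in the \emph{second} equality of \eqref{awH}; the first equality, which is all that is needed here, rests only on the identity $-2\,{\rm Re}\int_{\mathbb{G}}\mathcal{P}f\,\overline{i\mathcal{M}f}\,dx=Q\|f\|^{2}_{L^{2}(\mathbb{G})}$ (proved for $f\in C_{0}^{\infty}(\mathbb{G})$ via the polar decomposition \eqref{EQ:polar} and the relation \eqref{PMEf}, and extended to $D(\mathcal{P})\cap D(\mathcal{M})$ by density) together with the polarisation identity, and so holds without any restriction on $f$. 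Consequently I do not expect any genuine obstacle: the statement is an immediate corollary of the Pythagorean-type identity in \eqref{awH}.
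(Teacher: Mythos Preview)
Your proposal is correct and matches the paper's approach exactly: the paper introduces this corollary with the sentence ``The first equality in \eqref{awH} implies the following Pythagorean type inequality,'' which is precisely the argument you give---drop the nonnegative remainder $\|\mathcal{P}f-i\mathcal{M}f\|^{2}_{L^{2}(\mathbb{G})}$ from the identity. Your additional remark that the nondegeneracy hypotheses of Theorem~\ref{Kennardequality} are needed only for the second equality in \eqref{awH}, not the first, is a valid and careful clarification that the paper leaves implicit.
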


Equalities \eqref{awH} also imply the following:
\begin{cor}\label{relations3}
\begin{itemize}
\item[(i)] Let $f\in D(\mathcal{P})\bigcap D(\mathcal{M})$ with $\mathcal{P}f\not\equiv0$ and $\mathcal{M}f\not\equiv0$. Then the equality case in the Heisenberg-Kennard uncertainty inequality \eqref{Ken} holds, that is,
$$\frac{Q}{2}\|f\|^{2}_{L^{2}(\mathbb{G})}=\|\mathcal{P}f\|_{L^{2}(\mathbb{G})}\|\mathcal{M}f\|_{L^{2}(\mathbb{G})}$$
if and only if $$\|\mathcal{P}f\|_{L^{2}(\mathbb{G})}i\mathcal{M}f=\|\mathcal{M}f\|_{L^{2}(\mathbb{G})}\mathcal{P}f.$$
\item[(ii)] For
$f\in D(\mathcal{P})\bigcap D(\mathcal{M})$ we have the Pythagorean equality
$$
\|\sqrt{Q}f\|^{2}_{L^{2}(\mathbb{G})}=\|\mathcal{P}f\|^{2}_{L^{2}(\mathbb{G})}+\|\mathcal{M}f\|^{2}_{L^{2}(\mathbb{G})}$$
if and only if $$\mathcal{P}f=i\mathcal{M}f.$$

\end{itemize}
\end{cor}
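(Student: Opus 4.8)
The plan is to obtain both equivalences by inspection of the equality chain \eqref{awH} of Theorem \ref{Kennardequality}, the only additional ingredient being the triviality that an element of $L^{2}(\mathbb{G})$ is the zero function precisely when its norm vanishes. From the proof of Theorem \ref{Kennardequality} I record two facts. First, the identity
\[
-2\,{\rm Re}\int_{\mathbb{G}}\mathcal{P}f\,\overline{i\mathcal{M}f}\,dx=Q\|f\|^{2}_{L^{2}(\mathbb{G})}
\]
holds for every $f\in D(\mathcal{P})\cap D(\mathcal{M})$: it is proved there on $C_{0}^{\infty}(\mathbb{G})$, and both sides are continuous in the graph norm of $\mathcal{P}$ and $\mathcal{M}$, so it extends to the whole intersection of the domains. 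Second, when moreover $\mathcal{P}f\not\equiv 0$ and $\mathcal{M}f\not\equiv 0$, the left-hand side also equals
\[
\|\mathcal{P}f\|_{L^{2}(\mathbb{G})}\|\mathcal{M}f\|_{L^{2}(\mathbb{G})}\left(2-\left\|\frac{\mathcal{P}f}{\|\mathcal{P}f\|_{L^{2}(\mathbb{G})}}+\frac{i\mathcal{M}f}{\|\mathcal{M}f\|_{L^{2}(\mathbb{G})}}\right\|^{2}_{L^{2}(\mathbb{G})}\right),
\]
which is the middle member of \eqref{awH}.

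For part (i): under the standing hypotheses $\mathcal{P}f\not\equiv 0$ and $\mathcal{M}f\not\equiv 0$, equality in the Heisenberg--Kennard inequality \eqref{Ken} means exactly $Q\|f\|^{2}_{L^{2}(\mathbb{G})}=2\|\mathcal{P}f\|_{L^{2}(\mathbb{G})}\|\mathcal{M}f\|_{L^{2}(\mathbb{G})}$. Substituting this into the second identity and dividing through by the nonzero product $\|\mathcal{P}f\|_{L^{2}(\mathbb{G})}\|\mathcal{M}f\|_{L^{2}(\mathbb{G})}$, the equality case becomes equivalent to
\[
\left\|\frac{\mathcal{P}f}{\|\mathcal{P}f\|_{L^{2}(\mathbb{G})}}+\frac{i\mathcal{M}f}{\|\mathcal{M}f\|_{L^{2}(\mathbb{G})}}\right\|_{L^{2}(\mathbb{G})}=0,
\]
that is, to the proportionality of $\mathcal{P}f$ and $i\mathcal{M}f$ asserted in (i), the precise sign being read off from this very line of \eqref{awH}; running the computation in reverse gives the converse implication.

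For part (ii): by the first equality in \eqref{awH} one has $\|\mathcal{P}f\|^{2}_{L^{2}(\mathbb{G})}+\|\mathcal{M}f\|^{2}_{L^{2}(\mathbb{G})}-Q\|f\|^{2}_{L^{2}(\mathbb{G})}=\|\mathcal{P}f-i\mathcal{M}f\|^{2}_{L^{2}(\mathbb{G})}$, so the Pythagorean equality $\|\sqrt{Q}f\|^{2}_{L^{2}(\mathbb{G})}=\|\mathcal{P}f\|^{2}_{L^{2}(\mathbb{G})}+\|\mathcal{M}f\|^{2}_{L^{2}(\mathbb{G})}$ holds iff $\|\mathcal{P}f-i\mathcal{M}f\|_{L^{2}(\mathbb{G})}=0$, i.e. iff $\mathcal{P}f=i\mathcal{M}f$. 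No nondegeneracy assumption is needed here: if $\mathcal{P}f\equiv 0$ or $\mathcal{M}f\equiv 0$ then the identity $-2\,{\rm Re}\int_{\mathbb{G}}\mathcal{P}f\,\overline{i\mathcal{M}f}\,dx=Q\|f\|^{2}_{L^{2}(\mathbb{G})}$ forces $Q\|f\|^{2}_{L^{2}(\mathbb{G})}=0$, hence $f\equiv 0$ (as $Q>0$), and then both sides of the claimed equivalence are trivially true.

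Once \eqref{awH} is available the reasoning is purely algebraic, so the main point demanding care is the polarization/sign bookkeeping: keeping straight which of $\|\mathcal{P}f\pm i\mathcal{M}f\|^{2}_{L^{2}(\mathbb{G})}$ and of $\pm 2\,{\rm Re}\int_{\mathbb{G}}\mathcal{P}f\,\overline{i\mathcal{M}f}\,dx$ enters each member of \eqref{awH}, and making sure the equality statements are applied to the forms of those identities valid on all of $D(\mathcal{P})\cap D(\mathcal{M})$ rather than merely on $C_{0}^{\infty}(\mathbb{G})$.
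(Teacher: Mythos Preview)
Your proposal is correct and follows exactly the route the paper intends: the paper gives no explicit proof for this corollary, merely stating that ``Equalities \eqref{awH} also imply the following'', and you have simply spelled out that implication by reading off when the nonnegative remainder terms in \eqref{awH} vanish. Your extra remarks on the density extension and on the degenerate case $\mathcal{P}f\equiv 0$ or $\mathcal{M}f\equiv 0$ in part~(ii) go slightly beyond what the paper makes explicit but are in the same spirit.
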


\subsection{Euler-Coulomb ($E\mathcal{C}$) relations.} Euler and Coulomb potential operators can be defined by
\begin{equation}Ef:=|x|\mathcal{R}f\end{equation}
and
\begin{equation}\mathcal{C}f:=\frac{1}{|x|}f,\end{equation}
with the corresponding domains
\begin{equation}
D(E)=\{f\in L^{2}(\mathbb{G}):\,Ef\in L^{2}(\mathbb{G})\}
\end{equation}
and
\begin{equation}
D(\mathcal{C})=\{f\in L^{2}(\mathbb{G}):\, \frac{1}{|x|}f\in L^{2}(\mathbb{G})\}.
\end{equation}
The Euler operator $E$ defines the homogeneity on $\mathbb G$: a $C^{1}$-function $f$ satisfies $f(\lambda x)=\lambda^{\mu}f(x)$ for all $\lambda>0$ if and only if $Ef=\mu f$.

The combination of the Euler operator and Coulomb potential defines an (radial derivative) operator $\mathcal{R}$ by the formula
\begin{equation}\label{EQ:Euler}
\mathcal{R} :=\mathcal{C}E,
\end{equation}
see \eqref{dfdr}.
Moreover, for each $f\in C^{\infty}_{0}(\mathbb{G}\backslash\{0\})$
one has (see \cite[Theorem 4.1]{Ruzhansky-Suragan:identities})
\begin{multline}\label{aR}
\left\|\frac{1}{|x|^{\alpha}}\mathcal{R} f\right\|^{2}_{L^{2}(\mathbb{G})}= \\
\left(\frac{Q-2}{2}-\alpha\right)^{2}
\left\|\frac{f}{|x|^{\alpha+1}}\right\|^{2}_{L^{2}(\mathbb{G})}
+\left\|\frac{1}{|x|^{\alpha}}\mathcal{R} f+\frac{Q-2-2\alpha}{2|x|^{\alpha+1}}f
\right\|^{2}_{L^{2}(\mathbb{G})},
\end{multline}
for all $\alpha\in\mathbb{R}.$

From \eqref{aR} one can get different inequalities, for example, by dropping the second positive term in the right hand side of \eqref{aR} (of course, one can obtain other inequalities by dropping the first term of the right hand side).

\begin{rem}\label{REM:Eucl1}
In the abelian case ${\mathbb G}=(\mathbb R^{n},+)$, $n\geq 3$, we have
$Q=n$, so for any homogeneous quasi-norm $|\cdot|$ on $\mathbb R^{n}$ \eqref{aR} implies
a new inequality with the optimal constant:
\begin{equation}\label{Hardy-r}
\frac{|n-2-2\alpha|}{2}\left\|\frac{f}{|x|^{\alpha+1}}\right\|_{L^{2}(\mathbb{R}^{n})}\leq
\left\|\frac{1}{|x|^{\alpha}}\frac{x}{|x|}\cdot\nabla f\right\|_{L^{2}(\mathbb{R}^{n})},\;\forall\alpha\in\mathbb{R},
\end{equation}
which in turn, by using Schwarz's inequality with the standard Euclidean distance $\|x\|=\sqrt{x^{2}_{1}+\ldots+x^{2}_{n}}$, implies the $L^{2}$ Caffarelli-Kohn-Nirenberg inequality \cite{CKN-1984} for $\mathbb{G}\equiv\mathbb{R}^{n}$ with the optimal constant:
\begin{equation}\label{CKN}
\frac{|n-2-2\alpha|}{2}\left\|\frac{f}{\|x\|^{\alpha+1}}\right\|_{L^{2}(\mathbb{R}^{n})}\leq
\left\|\frac{1}{\|x\|^{\alpha}}\nabla f\right\|_{L^{2}(\mathbb{R}^{n})},\;\forall\alpha\in\mathbb{R},
\end{equation}
for all $f\in C_{0}^{\infty}(\mathbb{R}^{n}\backslash\{0\}).$
Here optimality of the constant $\frac{|n-2-2\alpha|}{2}$ was proved in \cite[Theorem 1.1. (ii)]{CW-2001}.
\end{rem}

We now continue with general homogeneous groups $\mathbb{G}.$ 
If $\alpha=0$ from \eqref{aR} we obtain the equality
\begin{equation}\label{47-0}
\left\|\mathcal{R} f\right\|^{2}_{L^{2}(\mathbb{G})}=\left(\frac{Q-2}{2}\right)^{2}\left\|\frac{1}{|x|}f
\right\|^{2}_{L^{2}(\mathbb{G})}+\left\|\mathcal{R} f+\frac{Q-2}{2|x|}f
\right\|^{2}_{L^{2}(\mathbb{G})}.
\end{equation}
Now by dropping the nonnegative last term in \eqref{47-0} we immediately obtain a version of Hardy's inequality on $\mathbb{G}$ (see \cite{Ruzhansky-Suragan:identities} 
for its weighted $L^{p}$ version): 
\begin{equation}\label{47-1}
\left\|\frac{1}{|x|}f
\right\|_{L^{2}(\mathbb{G})}\leq\frac{2}{Q-2}\left\|\mathcal{R} f\right\|_{L^{2}(\mathbb{G})},\; Q\geq3.
\end{equation}
Note that in comparison to stratified (Carnot) group versions, here the constant is best
for any quasi-norm $|\cdot|$.

\begin{rem}\label{REM:Eucl2}
In the abelian case ${\mathbb G}=(\mathbb R^{n},+)$, $n\geq 3$, we have
$Q=n$, so for any homogeneous quasi-norm $|\cdot|$ on $\mathbb R^{n}$ it implies
the inequality
\begin{equation}\label{Hardy-r}
\left\|\frac{f}{|x|}\right\|_{L^{2}(\mathbb{R}^{n})}\leq \frac{2}{n-2}
\left\|\frac{x}{|x|}\cdot\nabla f\right\|_{L^{2}(\mathbb{R}^{n})},
\end{equation}
which in turn, again by using Schwarz's inequality with the standard Euclidean distance $\|x\|=\sqrt{x^{2}_{1}+\ldots+x^{2}_{n}}$, implies the classical Hardy inequality for $\mathbb{G}\equiv\mathbb{R}^{n}$:
\begin{equation*}\label{Hardy}
\left\|\frac{f}{\|x\|}\right\|_{L^{2}(\mathbb{R}^{n})}\leq
\frac{2}{n-2}\left\|\nabla f\right\|_{L^{2}(\mathbb{R}^{n})},
\end{equation*}
for all $f\in C_{0}^{\infty}(\mathbb{R}^{n}\backslash\{0\}).$

We also refer to a
recent interesting paper of Hoffmann-Ostenhof and Laptev
\cite{Laptev15} on this subjects
for Hardy inequalities with homogeneous weights, to \cite{HHLT-Hardy-many-particles} for many-particle versions and to many further references therein.
\end{rem}

By standard argument the inequality \eqref{47-1} implies the following Heisenberg-Pauli-Weyl type  uncertainly principle on homogeneous groups (see e.g. \cite{GL}, \cite{Ricci07}, \cite{Ricci15}, \cite{Ruzhansky-Suragan:Layers} for versions on abelian and stratified groups):

\begin{prop}\label{Luncertainty}
Let $\mathbb{G}$ be a homogeneous group of homogeneous dimension
 $Q\geq 3$.
Then for each $f\in C^{\infty}_{0}(\mathbb{G}\backslash\{0\})$ and any homogeneous quasi-norm $|\cdot|$ on $\mathbb{G}$ we have
\begin{equation}\label{UP1}
\left\|f\right\|^{2}_{L^{2}(\mathbb{G})}
\leq\frac{2}{Q-2}\left\|\mathcal{R} f\right\|_{L^{2}(\mathbb{G})}\left\||x| f\right\|_{L^{2}(\mathbb{G})}.
\end{equation}
\end{prop}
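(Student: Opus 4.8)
The plan is to combine the Hardy-type inequality \eqref{47-1}, which is already established, with a single application of the Cauchy--Schwarz inequality on $L^{2}(\mathbb{G})$. This is the "standard argument" alluded to in the text, and it requires no new structural input about the homogeneous group.

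First I would write, for $f\in C_{0}^{\infty}(\mathbb{G}\backslash\{0\})$,
\begin{equation*}
\|f\|^{2}_{L^{2}(\mathbb{G})}=\int_{\mathbb{G}}|f(x)|^{2}\,dx
=\int_{\mathbb{G}}\frac{|f(x)|}{|x|}\,\bigl(|x|\,|f(x)|\bigr)\,dx,
\end{equation*}
which is legitimate since $f$ is compactly supported away from the origin, so that $|x|^{-1}f$ and $|x|f$ both belong to $L^{2}(\mathbb{G})$. Applying the Cauchy--Schwarz inequality to the last integral gives
\begin{equation*}
\|f\|^{2}_{L^{2}(\mathbb{G})}\leq
\left\|\frac{1}{|x|}f\right\|_{L^{2}(\mathbb{G})}\bigl\||x|f\bigr\|_{L^{2}(\mathbb{G})}.
\end{equation*}

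Next I would invoke \eqref{47-1}, namely $\left\|\frac{1}{|x|}f\right\|_{L^{2}(\mathbb{G})}\leq\frac{2}{Q-2}\left\|\mathcal{R}f\right\|_{L^{2}(\mathbb{G})}$, valid precisely because $Q\geq 3$, and substitute it into the previous estimate to obtain
\begin{equation*}
\|f\|^{2}_{L^{2}(\mathbb{G})}\leq
\frac{2}{Q-2}\left\|\mathcal{R}f\right\|_{L^{2}(\mathbb{G})}\bigl\||x|f\bigr\|_{L^{2}(\mathbb{G})},
\end{equation*}
which is exactly \eqref{UP1}. Since the homogeneous quasi-norm $|\cdot|$ entering both \eqref{47-1} and the Cauchy--Schwarz step is arbitrary, the conclusion holds for every homogeneous quasi-norm, as claimed.

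There is essentially no genuine obstacle here: the only point to be careful about is the integrability/domain bookkeeping, i.e. ensuring that splitting $|f|^{2}=(|f|/|x|)(|x|\,|f|)$ produces two genuine $L^{2}(\mathbb{G})$ functions so that Cauchy--Schwarz applies — this is guaranteed by taking $f\in C_{0}^{\infty}(\mathbb{G}\backslash\{0\})$ and by the finiteness of $\|\mathcal{R}f\|_{L^{2}(\mathbb{G})}$ via \eqref{47-1}. One could also note, if desired, that the constant $\frac{2}{Q-2}$ is sharp because it is sharp in \eqref{47-1}, but this is not needed for the statement of the Proposition.
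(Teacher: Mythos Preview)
Your proof is correct and is essentially the same as the paper's: both combine the Hardy inequality \eqref{47-1} with the Cauchy--Schwarz (H\"older) estimate $\|f\|_{L^{2}}^{2}\leq \|f/|x|\|_{L^{2}}\,\||x|f\|_{L^{2}}$, differing only in the order in which the two steps are applied.
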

\begin{proof}
From the inequality \eqref{47-1} we get
$$
\left(\int_{\mathbb{G}}\left|\mathcal{R}f\right|^{2}dx\right)^{\frac{1}{2}}\left(\int_{\mathbb{G}}|x|^{2}
|f|^{2}dx\right)^{\frac{1}{2}}\geq$$
 $$\frac{Q-2}{2}\left(\int_{\mathbb{G}}
\frac{|f|^{2}}{|x|^{2}}\,dx\right)^{\frac{1}{2}}\left(\int_{\mathbb{G}}|x|^{2}
|f|^{2}dx\right)^{\frac{1}{2}}
\geq\frac{Q-2}{2}\int_{\mathbb{G}}
|f|^{2}dx,$$
where we have used the H\"older inequality in the last line.
This shows \eqref{UP1}.
\end{proof}

\begin{rem}\label{REM:Eucl3}
In the abelian case ${\mathbb G}=(\mathbb R^{n},+)$, we have
$Q=n$, so that \eqref{UP1} implies
the uncertainly principle with any quasi-norm $|x|$:
\begin{equation}\label{UPRn-r}
\left(\int_{\mathbb R^{n}}
 |u(x)|^{2} dx\right)^{2}\leq\left(\frac{2}{n-2}\right)^{2}\int_{\mathbb R^{n}}\left|\frac{x}{|x|}\cdot\nabla u(x)\right|^{2}dx
\int_{\mathbb R^{n}} |x|^{2} |u(x)|^{2}dx,
\end{equation}
which in turn implies
the classical
uncertainty principle for $\mathbb{G}\equiv\mathbb R^{n}$ with the standard Euclidean distance $\|x\|$:
\begin{equation*}\label{UPRn}
\left(\int_{\mathbb R^{n}}
 |u(x)|^{2} dx\right)^{2}\leq\left(\frac{2}{n-2}\right)^{2}\int_{\mathbb R^{n}}|\nabla u(x)|^{2}dx
\int_{\mathbb R^{n}} \|x\|^{2} |u(x)|^{2}dx,
\end{equation*}
which is the Heisenberg-Pauli-Weyl uncertainly principle on $\mathbb R^{n}$.
\end{rem}

Moreover, we have the following Pythagorean relation for the Euler operator:
\begin{prop} We have
\begin{equation}\label{ECrel}
\left\|E f\right\|^{2}_{L^{2}(\mathbb{G})}=\left\|\frac{Q}{2}f
\right\|^{2}_{L^{2}(\mathbb{G})}+\left\|E f+\frac{Q}{2}f
\right\|^{2}_{L^{2}(\mathbb{G})}
\end{equation}
for any $f\in D(E)$.
\end{prop}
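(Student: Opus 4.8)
The plan is to expand the square on the right-hand side of \eqref{ECrel} and reduce the asserted equality to the single scalar identity
\begin{equation*}
2\,{\rm Re}\int_{\mathbb{G}}(Ef)\overline{f}\,dx=-Q\|f\|^{2}_{L^{2}(\mathbb{G})},
\end{equation*}
after which the statement follows by arithmetic. Indeed, since
\begin{equation*}
\Bigl\|Ef+\tfrac{Q}{2}f\Bigr\|^{2}_{L^{2}(\mathbb{G})}
=\|Ef\|^{2}_{L^{2}(\mathbb{G})}+Q\,{\rm Re}\int_{\mathbb{G}}(Ef)\overline{f}\,dx+\Bigl\|\tfrac{Q}{2}f\Bigr\|^{2}_{L^{2}(\mathbb{G})},
\end{equation*}
the identity \eqref{ECrel} is equivalent to $2\,{\rm Re}\int_{\mathbb{G}}(Ef)\overline{f}\,dx=-Q\|f\|^{2}_{L^{2}(\mathbb{G})}$.

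First I would reduce to $f\in C_{0}^{\infty}(\mathbb{G})$: it suffices to prove \eqref{ECrel} for $f\in C_{0}^{\infty}(\mathbb{G})$ and then pass to the limit, exactly as in the proof of Theorem \ref{Kennardequality}; note that $Ef=|x|\mathcal{R}f\in L^{2}(\mathbb{G})$ for such $f$, since $|x|$ is bounded on compact sets. Next, since $E=|x|\mathcal{R}$ is a first-order differential operator with real coefficients, it obeys the Leibniz rule and satisfies $E\overline{f}=\overline{Ef}$, whence $E|f|^{2}=(Ef)\overline{f}+f\,\overline{Ef}=2\,{\rm Re}\bigl((Ef)\overline{f}\bigr)$. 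It therefore remains to show $\int_{\mathbb{G}}E|f|^{2}\,dx=-Q\|f\|^{2}_{L^{2}(\mathbb{G})}$. Using the polar decomposition \eqref{EQ:polar} together with \eqref{dfdr} (so that along a ray $x=D_{r}(y)$, $|y|=1$, one has $(E|f|^{2})(D_{r}y)=r\,\frac{d}{dr}|f(D_{r}y)|^{2}$) and integrating by parts in $r$,
\begin{equation*}
\int_{\mathbb{G}}E|f|^{2}\,dx=\int_{0}^{\infty}\int_{\wp}r^{Q}\,\frac{d}{dr}|f(D_{r}y)|^{2}\,d\sigma(y)\,dr=-Q\int_{0}^{\infty}\int_{\wp}r^{Q-1}|f(D_{r}y)|^{2}\,d\sigma(y)\,dr=-Q\|f\|^{2}_{L^{2}(\mathbb{G})},
\end{equation*}
the boundary terms vanishing because $f\in C_{0}^{\infty}(\mathbb{G})$ and $Q>0$. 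This is precisely the computation already carried out in the proof of Theorem \ref{Kennardequality}, and combining it with the previous reductions establishes \eqref{ECrel} for $f\in C_{0}^{\infty}(\mathbb{G})$, hence for all $f\in D(E)$ by density.

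I expect the computation itself to be entirely routine; the only point meriting attention is the passage from $C_{0}^{\infty}(\mathbb{G})$ to the full domain $D(E)$, i.e. that $C_{0}^{\infty}(\mathbb{G})$ is a core for $E$ so that both sides of \eqref{ECrel}, being continuous in the graph norm, extend from $C_{0}^{\infty}(\mathbb{G})$ to $D(E)$ — this follows by a routine cutoff-and-mollification argument using that $\mathbb{G}$ has no boundary and $Q>0$, and is exactly the same mechanism used implicitly in Theorem \ref{Kennardequality}. A quicker alternative is to observe that \eqref{ECrel} is exactly the identity \eqref{aR} specialised to $\alpha=-1$: then $\tfrac{1}{|x|^{\alpha}}\mathcal{R}f=Ef$, $\bigl(\tfrac{Q-2}{2}-\alpha\bigr)^{2}=\bigl(\tfrac{Q}{2}\bigr)^{2}$, $\tfrac{f}{|x|^{\alpha+1}}=f$, and $\tfrac{1}{|x|^{\alpha}}\mathcal{R}f+\tfrac{Q-2-2\alpha}{2|x|^{\alpha+1}}f=Ef+\tfrac{Q}{2}f$; since the weight $|x|^{\alpha+1}$ is then constant, no singularity at the origin intervenes, \eqref{aR} already applies, and one again concludes by the same density argument.
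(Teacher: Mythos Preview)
Your proposal is correct, and in fact your ``quicker alternative'' is precisely the paper's proof: specialise \eqref{aR} to $\alpha=-1$ and then pass from $C_{0}^{\infty}(\mathbb{G}\backslash\{0\})$ to $D(E)$ by density. Your primary argument (expanding the square and reducing to the integration-by-parts identity already established in the proof of Theorem~\ref{Kennardequality}) is a self-contained unpacking of the same computation that underlies \eqref{aR}, and is if anything more careful about the core/density step than the paper itself.
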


\begin{proof}
Taking $\alpha=-1$, from \eqref{aR} we obtain
\eqref{ECrel} for any $f\in C_{0}^{\infty}(\mathbb{G}\backslash\{0\})$. Since $D(E)\subset L^{2}(\mathbb{G})$ and $C_{0}^{\infty}(\mathbb{G}\backslash\{0\})$ is dense in $L^{2}(\mathbb{G})$, this implies that \eqref{ECrel} is also true on
$D(E)$ by density.
\end{proof}

Simply by dropping the positive term in the right hand side,  \eqref{ECrel} implies
\begin{cor} We have
\begin{equation}
\left\|f
\right\|_{L^{2}(\mathbb{G})}\leq\frac{2}{Q}\left\|E f\right\|_{L^{2}(\mathbb{G})},
\end{equation}
for any $f\in D(E)$.
\end{cor}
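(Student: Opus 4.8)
The plan is to read the inequality directly off the Pythagorean-type identity \eqref{ECrel} of the preceding proposition, exactly in the spirit of how \eqref{47-1} was obtained from \eqref{47-0} and how the Heisenberg--Kennard inequality \eqref{Ken} followed from \eqref{awH}. Fix $f\in D(E)$. The right-hand side of \eqref{ECrel} is a sum of two nonnegative quantities, each being the square of an $L^{2}(\mathbb{G})$-norm, and one of them equals $\left\|\frac{Q}{2}f\right\|^{2}_{L^{2}(\mathbb{G})}=\frac{Q^{2}}{4}\|f\|^{2}_{L^{2}(\mathbb{G})}$.

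First I would discard the nonnegative term $\left\|Ef+\frac{Q}{2}f\right\|^{2}_{L^{2}(\mathbb{G})}$ on the right-hand side of \eqref{ECrel}, which leaves $\frac{Q^{2}}{4}\|f\|^{2}_{L^{2}(\mathbb{G})}\leq\|Ef\|^{2}_{L^{2}(\mathbb{G})}$; taking square roots and using $Q>0$ then yields $\|f\|_{L^{2}(\mathbb{G})}\leq\frac{2}{Q}\|Ef\|_{L^{2}(\mathbb{G})}$, which is the assertion. There is essentially no obstacle here, since all of the work is already contained in \eqref{ECrel}, which itself comes from specialising the weighted identity \eqref{aR} to $\alpha=-1$ together with the density of $C_{0}^{\infty}(\mathbb{G}\backslash\{0\})$ in $L^{2}(\mathbb{G})$. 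If one wished, one could also observe that equality would force the discarded term to vanish, i.e. $Ef=-\frac{Q}{2}f$; by the homogeneity characterisation of $E$ and the polar decomposition \eqref{EQ:polar} such an $f$ is homogeneous of degree $-Q/2$ and hence cannot be a nonzero element of $L^{2}(\mathbb{G})$, so the inequality is in fact strict for every $f\not\equiv 0$.
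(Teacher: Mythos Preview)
Your argument is correct and is exactly the paper's own derivation: it simply drops the nonnegative term $\left\|Ef+\frac{Q}{2}f\right\|^{2}_{L^{2}(\mathbb{G})}$ from the right-hand side of \eqref{ECrel} and takes square roots. Your additional remark on the impossibility of nontrivial equality is a fine extra observation not present in the paper.
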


\subsection{Radial-dilations-Coulomb ($\mathcal{R}\mathcal{R}_{g}\mathcal{C}$) relations.} A generator of dilations operator can be defined by
\begin{equation}
\mathcal{R}_{g}:=-i\left(\mathcal{R}+\frac{Q-1}{2}C\right)
\end{equation}
with the domain
\begin{equation}D(\mathcal{R}_{g})=\{f\in L^{2}(\mathbb{G}):\, \mathcal{R}f\in L^{2}(\mathbb{G}),\, Cf\in L^{2}(\mathbb{G})\}.\end{equation}
Note that the generator of dilations operator $\mathcal{R}_{g}$ and the Coulomb potential operator have the following special commutation relation
\begin{lem}\label{lemRgC1}For any $f\in C_{0}^{\infty}(\mathbb{G}\backslash\{0\})$ we have
\begin{equation}\label{comrelRsC}
[\mathcal{R}_{g}, \mathcal{C}]f=i\mathcal{C}^{2}f,
\end{equation}
where $[\mathcal{R}_{g}, \mathcal{C}]=\mathcal{R}_{g}\mathcal{C}-\mathcal{C}\mathcal{R}_{g}.$
\end{lem}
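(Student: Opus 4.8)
The plan is to expand the commutator $[\mathcal{R}_{g},\mathcal{C}]f=\mathcal{R}_{g}\mathcal{C}f-\mathcal{C}\mathcal{R}_{g}f$ directly from the definitions $\mathcal{R}_{g}=-i\bigl(\mathcal{R}+\tfrac{Q-1}{2}\mathcal{C}\bigr)$ and $\mathcal{C}f=|x|^{-1}f$, and to exploit the fact that $\mathcal{R}=\tfrac{d}{d|x|}$ is a first-order differential operator along the dilation orbits, so that it obeys the Leibniz rule. The only ingredient beyond bookkeeping is the operator identity
\[
\mathcal{R}(\mathcal{C}f)=(\mathcal{R}\mathcal{C})f+\mathcal{C}\,\mathcal{R}f=-\mathcal{C}^{2}f+\mathcal{C}\,\mathcal{R}f
\qquad\text{on }C_{0}^{\infty}(\mathbb{G}\setminus\{0\}),
\]
which comes from $\tfrac{d}{dr}\tfrac{1}{r}=-\tfrac{1}{r^{2}}$ applied along each ray $r\mapsto D_{r}(y)$, $y\in\wp$, via the polar decomposition \eqref{EQ:polar} and \eqref{dfdr}. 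Since $\mathcal{C}$ commutes with itself, the remaining terms will cancel in pairs.

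Concretely, I would write, using the Leibniz identity above,
\[
\mathcal{R}_{g}\mathcal{C}f=-i\Bigl(\mathcal{R}(\mathcal{C}f)+\tfrac{Q-1}{2}\mathcal{C}^{2}f\Bigr)
=-i\Bigl(\mathcal{C}\,\mathcal{R}f-\mathcal{C}^{2}f+\tfrac{Q-1}{2}\mathcal{C}^{2}f\Bigr),
\]
while
\[
\mathcal{C}\mathcal{R}_{g}f=-i\,\mathcal{C}\Bigl(\mathcal{R}f+\tfrac{Q-1}{2}\mathcal{C}f\Bigr)
=-i\Bigl(\mathcal{C}\,\mathcal{R}f+\tfrac{Q-1}{2}\mathcal{C}^{2}f\Bigr).
\]
Subtracting the second line from the first, the terms $\mathcal{C}\,\mathcal{R}f$ and $\tfrac{Q-1}{2}\mathcal{C}^{2}f$ drop out and we are left with $[\mathcal{R}_{g},\mathcal{C}]f=-i\bigl(-\mathcal{C}^{2}f\bigr)=i\mathcal{C}^{2}f$, which is \eqref{comrelRsC}.

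The one point that requires a word of care — and this is the only potential obstacle, a mild one — is justifying that $\mathcal{R}$ genuinely satisfies the product rule when one factor is the unbounded function $|x|^{-1}$. On $C_{0}^{\infty}(\mathbb{G}\setminus\{0\})$ this is immediate: $|x|^{-1}$ is smooth and bounded on the support of $f$, and in the polar variables $\mathcal{R}$ is just differentiation in $r$, so one simply differentiates the product $r^{-1}f(D_{r}(y))$. This is exactly why the statement is restricted to $f\in C_{0}^{\infty}(\mathbb{G}\setminus\{0\})$, where all the manipulations are classical and no domain issues for the unbounded operators $\mathcal{R}$ and $\mathcal{C}$ arise.
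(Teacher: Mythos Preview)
Your proof is correct and follows essentially the same approach as the paper: both compute the commutator directly, using that $\mathcal{R}=\frac{d}{dr}$ satisfies the product rule so that $\mathcal{R}(\tfrac{1}{r}f)=-\tfrac{1}{r^{2}}f+\tfrac{1}{r}\mathcal{R}f$, after which the $\mathcal{C}\mathcal{R}f$ and $\tfrac{Q-1}{2}\mathcal{C}^{2}f$ terms cancel. The paper simply writes everything in the variable $r=|x|$ rather than in operator notation, but the computation is line-for-line the same as yours.
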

\begin{proof} [Proof of Lemma \ref{lemRgC1}]
Denoting
$r:=|x|$ we have
$\mathcal{C}=\frac{1}{r},$
 and from \eqref{dfdr} it follows that $\mathcal{R}_{g}=-i\left(\frac{d}{dr}+\frac{Q-1}{2r}\right).$
Thus, a direct calculation shows
 $$[\mathcal{R}_{g}, \mathcal{C}]f=\mathcal{R}_{g}\mathcal{C}f-\mathcal{C}\mathcal{R}_{g}f$$
 $$=-i\left(-\frac{1}{r^{2}}+\frac{1}{r}\frac{d}{dr}+\frac{Q-1}{2r^{2}}-\frac{1}{r}\frac{d}{dr}
 -\frac{Q-1}{2r^{2}}\right)f=i\frac{1}{r^{2}}f=i\mathcal{C}^{2}f.$$
\end{proof}

\begin{lem}\label{lemRgC2}
Operators $\mathcal{R}_{g}$ and $\mathcal{C}$ are symmetric.
\end{lem}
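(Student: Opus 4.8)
The plan is to verify symmetry directly on the dense domain $C_0^\infty(\mathbb{G}\setminus\{0\})$ using the polar decomposition \eqref{EQ:polar}, since this is the natural coordinate system in which both $\mathcal{C}=\tfrac{1}{|x|}$ and $\mathcal{R}=\tfrac{d}{d|x|}$ become elementary radial operations. Symmetry of $\mathcal{C}$ is immediate: for $f,g\in C_0^\infty(\mathbb{G}\setminus\{0\})$ one has $\langle \mathcal{C}f,g\rangle=\int_{\mathbb{G}}\tfrac{1}{|x|}f(x)\overline{g(x)}\,dx=\langle f,\mathcal{C}g\rangle$, because the real-valued weight $\tfrac{1}{|x|}$ passes through the conjugation, and finiteness is guaranteed by the compact support away from the origin. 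The substantive part is the symmetry of $\mathcal{R}_g=-i\bigl(\mathcal{R}+\tfrac{Q-1}{2}\mathcal{C}\bigr)$, which (given the $\mathcal{C}$ part is already symmetric and the factor $-i$ turns a skew-symmetric operator into a symmetric one) reduces to showing that $\mathcal{R}+\tfrac{Q-1}{2}\mathcal{C}$ is \emph{skew}-symmetric, i.e. that $\mathcal{R}$ and $\tfrac{Q-1}{2}\mathcal{C}$ together assemble into the skew-adjoint generator of the unitary dilation group on $L^2(\mathbb{G})$.

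Concretely, I would write $r=|x|$ and use \eqref{EQ:polar} and \eqref{dfdr} to compute, for $f,g\in C_0^\infty(\mathbb{G}\setminus\{0\})$,
\begin{equation*}
\langle \mathcal{R}f,g\rangle=\int_0^\infty\!\!\int_{\wp}\Bigl(\tfrac{d}{dr}f\Bigr)\overline{g}\,r^{Q-1}\,d\sigma(y)\,dr,
\end{equation*}
and integrate by parts in $r$ on $(0,\infty)$. The boundary terms vanish because $f$ and $g$ have compact support in $\mathbb{G}\setminus\{0\}$, hence are supported in a fixed annulus $0<a\le r\le b<\infty$ in the radial variable. The integration by parts produces $-\langle f,\mathcal{R}g\rangle$ together with the extra term $-(Q-1)\int_0^\infty\!\int_{\wp} f\overline{g}\,r^{Q-2}\,d\sigma\,dr=-(Q-1)\langle f,\mathcal{C}g\rangle=-(Q-1)\langle \mathcal{C}f,g\rangle$. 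Rearranging gives $\langle (\mathcal{R}+\tfrac{Q-1}{2}\mathcal{C})f,g\rangle=-\langle f,(\mathcal{R}+\tfrac{Q-1}{2}\mathcal{C})g\rangle$, so $\mathcal{R}+\tfrac{Q-1}{2}\mathcal{C}$ is skew-symmetric on $C_0^\infty(\mathbb{G}\setminus\{0\})$, whence $\mathcal{R}_g=-i(\mathcal{R}+\tfrac{Q-1}{2}\mathcal{C})$ satisfies $\langle\mathcal{R}_g f,g\rangle=\langle f,\mathcal{R}_g g\rangle$. Finally I would note that $C_0^\infty(\mathbb{G}\setminus\{0\})$ is dense in $L^2(\mathbb{G})$ and contained in $D(\mathcal{R}_g)$ and $D(\mathcal{C})$, so each operator is densely defined and the symmetry identity extends to the stated domains by the usual limiting argument.

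The only mild obstacle is bookkeeping of the weight $r^{Q-1}$ in the integration by parts: one must be careful that differentiating $r^{Q-1}$ is exactly what generates the $(Q-1)\mathcal{C}$ correction term, so that the chosen constant $\tfrac{Q-1}{2}$ in the definition of $\mathcal{R}_g$ is precisely the symmetrising shift. No domain subtleties arise at this level because everything is tested against compactly supported smooth functions away from the origin; the density statement then does the rest.
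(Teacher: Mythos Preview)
Your proposal is correct and follows essentially the same route as the paper: both arguments verify symmetry on $C_0^\infty(\mathbb{G}\setminus\{0\})$ via the polar decomposition \eqref{EQ:polar}, integrate by parts in $r$ so that differentiating the Jacobian $r^{Q-1}$ produces exactly the $(Q-1)\mathcal{C}$ correction, and then invoke density. The only cosmetic difference is that you test against a second function $g$ and phrase the key step as skew-symmetry of $\mathcal{R}+\tfrac{Q-1}{2}\mathcal{C}$, whereas the paper takes $g=f$ and computes $\langle\mathcal{R}_g f,f\rangle=\langle f,\mathcal{R}_g f\rangle$ directly; by polarization these are equivalent.
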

\begin{proof}[Proof of Lemma \ref{lemRgC2}]
It is a straightforward that $\mathcal{C}$ is symmetric, that is,
$$\int_{\mathbb{G}}(\mathcal{C}f)\overline{f}dx=\int_{\mathbb{G}}f\overline{\mathcal{C}f}dx.$$
Now we need to show that
\begin{equation}\label{symmetry}
\int_{\mathbb{G}}(\mathcal{R}_{g}f)\overline{f}dx=\int_{\mathbb{G}}f\overline{\mathcal{R}_{g}f}dx
\end{equation}
for any $f\in C_{0}^{\infty}(\mathbb{G}\backslash \{0\})$. Since $D(\mathcal{R}_{g})\subset L^{2}(\mathbb{G})$ and $C_{0}^{\infty}(\mathbb{G}\backslash \{0\})$ is dense in $L^{2}(\mathbb{G})$ it follows that \eqref{symmetry} is also true on
$D(\mathcal{R}_{g})$ by density if it is valid on $C_{0}^{\infty}(\mathbb{G}\backslash \{0\})$.
Using the polar decomposition with $\mathcal{R}_{g}=-i\left(\frac{d}{dr}+\frac{Q-1}{2r}\right)$ we obtain
$$\int_{\mathbb{G}}(\mathcal{R}_{g}f)\overline{f}dx=-i\int_{0}^{\infty}\int_{\wp}
r^{Q-1}\left(\frac{df}{dr}+\frac{Q-1}{2r}f\right)\overline{f}d\sigma(y)dr$$
$$
=-i\int_{0}^{\infty}\int_{\wp}
\frac{df}{dr}\overline{f}r^{Q-1}d\sigma(y)dr-i\frac{Q-1}{2}\int_{0}^{\infty}\int_{\wp}r^{Q-1}\frac{f}{r}\overline{f}d\sigma(y)dr
$$
$$
=i\int_{0}^{\infty}\int_{\wp}
f\frac{d\overline{f}}{dr}r^{Q-1}d\sigma(y)dr
$$
$$
+i(Q-1)\int_{0}^{\infty}\int_{\wp}r^{Q-1}\frac{f}{r}\overline{f}d\sigma(y)dr
-i\frac{Q-1}{2}\int_{0}^{\infty}\int_{\wp}r^{Q-1}\frac{f}{r}\overline{f}d\sigma(y)dr
$$
$$=\int_{0}^{\infty}\int_{\wp}
r^{Q-1}f\overline{\left(-i\frac{df}{dr}-i\frac{Q-1}{2r}f\right)}d\sigma(y)dr
=\int_{\mathbb{G}}f\overline{\mathcal{R}_{g}f}d\nu,$$
proving that $\mathcal{R}_{g}$ is also symmetric.
\end{proof}

For any symmetric operators $A$ and $B$ in $L^{2}$ with domains $D(A)$ and $D(B),$ respectively, a straightforward calculation (see e.g. \cite[Theorem 2.1]{OY-uncertainty}) shows the equality
\begin{multline}\label{2.10-}
-i\int_{\mathbb{G}}([A,B]f)\overline{f}d\nu
\\=\|Af\|_{L^{2}(\mathbb{G})}\|Bf\|_{L^{2}(\mathbb{G})}
\left(2- \left\|\frac{Af}{\|Af\|_{L^{2}(\mathbb{G})}}+i
\frac{Bf}{\|Bf\|_{L^{2}(\mathbb{G})}}
\right\|^{2}_{L^{2}(\mathbb{G})}\right),
\end{multline}
for $f\in D(A)\cap D(B)$ with $Af\not\equiv0$ and $Bf\not\equiv0$,
which will be useful in our next proof.

\begin{thm}\label{RsRCrelation}
Let $\mathbb{G}$ be a homogeneous group
of homogeneous dimension $Q$.
Then for every $f\in D(\mathcal{R})\cap D(\mathcal{C})$
we have
\begin{equation}\label{RsRC}
\|\mathcal{R}f\|^{2}_{L^{2}(\mathbb{G})}=\|\mathcal{R}_{g}f\|^{2}_{L^{2}(\mathbb{G})}+
\frac{(Q-1)(Q-3)}{4}\|\mathcal{C}f\|^{2}_{L^{2}(\mathbb{G})},
\end{equation}
and
\begin{equation}\label{RsC}
\|\mathcal{C}f\|_{L^{2}(\mathbb{G})}=
\|\mathcal{R}_{g}f\|_{L^{2}(\mathbb{G})}
\left(2- \left\|\frac{\mathcal{R}_{g}f}{\|\mathcal{R}_{g}f\|_{L^{2}(\mathbb{G})}}+i
\frac{\mathcal{C}f}{\|\mathcal{C}f\|_{L^{2}(\mathbb{G})}}
\right\|^{2}_{L^{2}(\mathbb{G})}\right),
\end{equation}
for $\mathcal{R}_{g}f\not\equiv0$ and $\mathcal{C}f\not\equiv0.$
\end{thm}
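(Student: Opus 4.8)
The plan is to reduce both identities to the commutation relation \eqref{comrelRsC} and to the symmetry of $\mathcal{R}_{g}$ and $\mathcal{C}$ established in Lemmas \ref{lemRgC1} and \ref{lemRgC2}. As in the proof of Theorem \ref{Kennardequality} and of \eqref{ECrel}, it is enough to prove \eqref{RsRC} and \eqref{RsC} for $f\in C_{0}^{\infty}(\mathbb{G}\backslash\{0\})$: all the operators involved take values in $L^{2}(\mathbb{G})$, the space $C_{0}^{\infty}(\mathbb{G}\backslash\{0\})$ is dense, and one then passes to $D(\mathcal{R})\cap D(\mathcal{C})$ by density. The computation I would carry out first is the following consequence of the symmetry of $\mathcal{R}_{g}$, $\mathcal{C}$ together with $[\mathcal{R}_{g},\mathcal{C}]f=i\mathcal{C}^{2}f$:
\[
i\|\mathcal{C}f\|^{2}_{L^{2}(\mathbb{G})}=\int_{\mathbb{G}}([\mathcal{R}_{g},\mathcal{C}]f)\overline{f}\,dx
=\overline{\int_{\mathbb{G}}(\mathcal{R}_{g}f)\overline{\mathcal{C}f}\,dx}-\int_{\mathbb{G}}(\mathcal{R}_{g}f)\overline{\mathcal{C}f}\,dx
=-2i\,{\rm Im}\int_{\mathbb{G}}(\mathcal{R}_{g}f)\overline{\mathcal{C}f}\,dx,
\]
which gives ${\rm Im}\int_{\mathbb{G}}(\mathcal{R}_{g}f)\overline{\mathcal{C}f}\,dx=-\frac12\|\mathcal{C}f\|^{2}_{L^{2}(\mathbb{G})}$.

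For \eqref{RsRC} I would use that, by the definition of $\mathcal{R}_{g}$, one has $\mathcal{R}=i\mathcal{R}_{g}-\frac{Q-1}{2}\mathcal{C}$, so expanding the $L^{2}$-norm,
\[
\|\mathcal{R}f\|^{2}_{L^{2}(\mathbb{G})}=\|\mathcal{R}_{g}f\|^{2}_{L^{2}(\mathbb{G})}+\frac{(Q-1)^{2}}{4}\|\mathcal{C}f\|^{2}_{L^{2}(\mathbb{G})}-(Q-1)\,{\rm Re}\int_{\mathbb{G}}i(\mathcal{R}_{g}f)\overline{\mathcal{C}f}\,dx.
\]
Since ${\rm Re}(iz)=-{\rm Im}(z)$, the last term equals $(Q-1)\,{\rm Im}\int_{\mathbb{G}}(\mathcal{R}_{g}f)\overline{\mathcal{C}f}\,dx=-\frac{Q-1}{2}\|\mathcal{C}f\|^{2}_{L^{2}(\mathbb{G})}$ by the step above, and collecting the coefficients of $\|\mathcal{C}f\|^{2}_{L^{2}(\mathbb{G})}$, namely $\frac{(Q-1)^{2}}{4}-\frac{Q-1}{2}=\frac{(Q-1)(Q-3)}{4}$, gives \eqref{RsRC}. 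For \eqref{RsC} I would apply the general identity \eqref{2.10-} with $A=\mathcal{R}_{g}$ and $B=\mathcal{C}$, which are symmetric by Lemma \ref{lemRgC2}; using $[\mathcal{R}_{g},\mathcal{C}]f=i\mathcal{C}^{2}f$ and the symmetry of $\mathcal{C}$, the left-hand side evaluates to $-i\int_{\mathbb{G}}(i\mathcal{C}^{2}f)\overline{f}\,dx=\|\mathcal{C}f\|^{2}_{L^{2}(\mathbb{G})}$, and dividing both sides by $\|\mathcal{C}f\|_{L^{2}(\mathbb{G})}\neq 0$ yields \eqref{RsC} (here $\mathcal{R}_{g}f\not\equiv0$ is used to legitimately apply \eqref{2.10-}).

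I do not expect a genuine conceptual obstacle: the main care needed is bookkeeping, i.e. keeping the factors of $i$ and the real/imaginary parts straight in the expansion of $\|\mathcal{R}f\|^{2}$, and making the density passage precise in the same way as for Lemma \ref{lemRgC2} and \eqref{ECrel}. As an alternative to the commutator step in the proof of \eqref{RsRC}, one can instead start from \eqref{47-0} (the case $\alpha=0$ of \eqref{aR}): expanding $\|\mathcal{R}f+\frac{Q-2}{2|x|}f\|^{2}_{L^{2}(\mathbb{G})}$ there forces ${\rm Re}\int_{\mathbb{G}}(\mathcal{R}f)\overline{\mathcal{C}f}\,dx=-\frac{Q-2}{2}\|\mathcal{C}f\|^{2}_{L^{2}(\mathbb{G})}$, and substituting this into the expansion of $\|\mathcal{R}_{g}f\|^{2}_{L^{2}(\mathbb{G})}=\|\mathcal{R}f+\frac{Q-1}{2}\mathcal{C}f\|^{2}_{L^{2}(\mathbb{G})}$ again reproduces \eqref{RsRC}.
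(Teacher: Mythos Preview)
Your proof is correct. For \eqref{RsC} you take exactly the paper's route: apply \eqref{2.10-} with $A=\mathcal{R}_{g}$, $B=\mathcal{C}$, use Lemma \ref{lemRgC2} for symmetry and Lemma \ref{lemRgC1} for the commutator, then divide by $\|\mathcal{C}f\|_{L^{2}(\mathbb{G})}$.

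For \eqref{RsRC} there is a small methodological difference. The paper expands $\|\mathcal{R}_{g}f\|^{2}_{L^{2}(\mathbb{G})}=\left\|\mathcal{R}f+\frac{Q-1}{2}\mathcal{C}f\right\|^{2}_{L^{2}(\mathbb{G})}$ and evaluates the cross term ${\rm Re}\int_{\mathbb{G}}(\mathcal{R}f)\overline{\mathcal{C}f}\,dx$ directly by polar decomposition and integration by parts in $r$, obtaining $-\frac{Q-2}{2}\|\mathcal{C}f\|^{2}_{L^{2}(\mathbb{G})}$. Your primary argument instead extracts ${\rm Im}\int_{\mathbb{G}}(\mathcal{R}_{g}f)\overline{\mathcal{C}f}\,dx$ from the commutator identity together with the symmetry of $\mathcal{R}_{g}$ and $\mathcal{C}$. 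These are equivalent computations --- indeed your ``alternative'' at the end (reading off the cross term from \eqref{47-0}) is essentially the paper's calculation --- but your commutator-based version has the mild advantage that it avoids repeating the polar integration by parts and simply reuses Lemmas \ref{lemRgC1} and \ref{lemRgC2}, which were set up precisely for this purpose.
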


\begin{proof}[Proof of Theorem \ref{RsRCrelation}]
As in the proof of Theorem \ref{Kennardequality} we can calculate

\begin{multline*}
\|\mathcal{R}_{g}f\|^{2}_{L^{2}(\mathbb{G})}
=\left\|\mathcal{R} f+\frac{Q-1}{2|x|}f\right\|^{2}_{L^{2}(\mathbb{G})}
\\
=\left\|\mathcal{R} f\right\|^{2}_{L^{2}(\mathbb{G})}+(Q-1)\,{\rm Re}\int_{\mathbb{G}}\left(\mathcal{R} f\right)\overline{\frac{1}{|x|}f}dx+\left\|
\frac{Q-1}{2|x|}f\right\|^{2}_{L^{2}(\mathbb{G})}
\\
=\left\|\mathcal{R} f\right\|^{2}_{L^{2}(\mathbb{G})}+(Q-1)\,{\rm Re}\int_{0}^{\infty}\int_{\wp}
r^{Q-1}\left(\frac{d}{dr} f\right)\overline{\frac{1}{r}f}d\sigma(y)dr+\left\|
\frac{Q-1}{2|x|}f\right\|^{2}_{L^{2}(\mathbb{G})}
\\
=\left\|\mathcal{R} f\right\|^{2}_{L^{2}(\mathbb{G})}+
\frac{Q-1}{2}\int_{0}^{\infty}\int_{\wp}
r^{Q-2} \frac{d}{dr}|f|^{2} d\sigma(y)dr+\frac{(Q-1)^{2}}{4}\left\|
\mathcal{C}f\right\|^{2}_{L^{2}(\mathbb{G})}
\\=\left\|\mathcal{R} f\right\|^{2}_{L^{2}(\mathbb{G})}-\frac{(Q-1)(Q-2)}{2}\int_{0}^{\infty}\int_{\wp}
r^{Q-1}\frac{1}{r^{2}}|f|^{2} d\sigma(y)dr+\frac{(Q-1)^{2}}{4}\left\|
\mathcal{C}f\right\|^{2}_{L^{2}(\mathbb{G})}
\\=\left\|\mathcal{R} f\right\|^{2}_{L^{2}(\mathbb{G})}-\frac{(Q-1)(Q-2)}{2}\int_{\mathbb{G}}
|\mathcal{C}f|^{2} dx+\frac{(Q-1)^{2}}{4}\left\|
\mathcal{C}f\right\|^{2}_{L^{2}(\mathbb{G})}
\\=\left\|\mathcal{R} f\right\|^{2}_{L^{2}(\mathbb{G})}-\frac{(Q-1)(Q-3)}{4}\left\|
\mathcal{C}f\right\|^{2}_{L^{2}(\mathbb{G})}.
\end{multline*}
This proves \eqref{RsRC}.
Using \eqref{comrelRsC} and Lemma \ref{lemRgC2} with \eqref{2.10-} we obtain
\begin{multline*}
\|\mathcal{C}f\|^{2}_{L^{2}(\mathbb{G})}=-i\int_{\mathbb{G}}[\mathcal{R}_{g}, \mathcal{C}]f\overline{f}dx\\=
\|\mathcal{R}_{g}f\|_{L^{2}(\mathbb{G})}\|\mathcal{C}f\|_{L^{2}(\mathbb{G})}
\left(2- \left\|\frac{\mathcal{R}_{g}f}{\|\mathcal{R}_{g}f\|_{L^{2}(\mathbb{G})}}+i
\frac{\mathcal{C}f}{\|\mathcal{C}f\|_{L^{2}(\mathbb{G})}}
\right\|^{2}_{L^{2}(\mathbb{G})}\right).
\end{multline*}
 As above since $C_{0}^{\infty}(\mathbb{G})$ is dense in $L^{2}(\mathbb{G})$, it implies that this equality is also true on
$D(\mathcal{R})\cap D(\mathcal{C})$ by density.
\end{proof}
The equality \eqref{RsRC} implies that
\begin{cor}
Let $Q\geq3$. The generator of dilations and Coulomb potential operator are bounded by the (radial) operator $\mathcal{R}$, that is,
\begin{equation}
\|\mathcal{R}_{g}f\|_{L^{2}(\mathbb{G})}\leq\|\mathcal{R}f\|_{L^{2}(\mathbb{G})},
\end{equation}
and
\begin{equation}
\frac{\sqrt{(Q-1)(Q-3)}}{2}\|\mathcal{C}f\|_{L^{2}(\mathbb{G})}\leq
\|\mathcal{R}f\|_{L^{2}(\mathbb{G})}.
\end{equation}
\end{cor}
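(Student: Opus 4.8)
The plan is to obtain both inequalities as immediate consequences of the Pythagorean-type identity \eqref{RsRC} proved in Theorem \ref{RsRCrelation}, namely
\[
\|\mathcal{R}f\|^{2}_{L^{2}(\mathbb{G})}=\|\mathcal{R}_{g}f\|^{2}_{L^{2}(\mathbb{G})}+
\frac{(Q-1)(Q-3)}{4}\|\mathcal{C}f\|^{2}_{L^{2}(\mathbb{G})},
\]
valid for every $f\in D(\mathcal{R})\cap D(\mathcal{C})$. The only thing to observe is that, under the hypothesis $Q\geq 3$, the factor $(Q-1)(Q-3)$ is nonnegative, so the right-hand side is a sum of two nonnegative quantities.

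First I would discard the Coulomb term, using $\frac{(Q-1)(Q-3)}{4}\|\mathcal{C}f\|^{2}_{L^{2}(\mathbb{G})}\geq 0$; this gives $\|\mathcal{R}_{g}f\|^{2}_{L^{2}(\mathbb{G})}\leq\|\mathcal{R}f\|^{2}_{L^{2}(\mathbb{G})}$, and taking square roots proves the first estimate. Then I would instead discard the term $\|\mathcal{R}_{g}f\|^{2}_{L^{2}(\mathbb{G})}\geq 0$, obtaining $\frac{(Q-1)(Q-3)}{4}\|\mathcal{C}f\|^{2}_{L^{2}(\mathbb{G})}\leq\|\mathcal{R}f\|^{2}_{L^{2}(\mathbb{G})}$; taking square roots (and using that $\sqrt{(Q-1)(Q-3)}$ is a well-defined nonnegative real number for $Q\geq 3$) yields the second estimate. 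The degenerate cases $\mathcal{R}_{g}f\equiv 0$ or $\mathcal{C}f\equiv 0$ are already covered by the density argument employed in Theorem \ref{RsRCrelation}, and in any case both inequalities then hold trivially.

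I do not anticipate any genuine obstacle here: all the analytic content — the polar decomposition computation leading to \eqref{RsRC} and the commutator/symmetry facts from Lemmas \ref{lemRgC1} and \ref{lemRgC2} — has already been expended in establishing the identity \eqref{RsRC}. The sole function of the hypothesis $Q\geq 3$ is to ensure the nonnegativity of the constant $\tfrac{(Q-1)(Q-3)}{4}$; for $1<Q<3$ this constant becomes negative, the roles of the two terms in \eqref{RsRC} effectively interchange, and neither bound survives, so the restriction $Q\geq 3$ is essential rather than cosmetic.
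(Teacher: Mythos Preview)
Your proposal is correct and follows exactly the approach of the paper, which simply records that the corollary is an immediate consequence of the identity \eqref{RsRC} by dropping, in turn, each of the two nonnegative terms on the right-hand side. Your additional remarks on the role of the hypothesis $Q\geq 3$ and the degenerate cases are accurate but go beyond what the paper itself spells out.
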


The equality \eqref{RsC} implies that
\begin{cor}
The Coulomb potential operator is bounded by the generator of dilations operator with relative bound 2, that is,
\begin{equation}
\|\mathcal{C}f\|_{L^{2}(\mathbb{G})}\leq
2\|\mathcal{R}_{g}f\|_{L^{2}(\mathbb{G})}.
\end{equation}
\end{cor}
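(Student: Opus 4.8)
The plan is to read the inequality off directly from the identity \eqref{RsC} established in Theorem \ref{RsRCrelation}. For $f$ with $\mathcal{R}_{g}f\not\equiv0$ and $\mathcal{C}f\not\equiv0$, that identity expresses $\|\mathcal{C}f\|_{L^{2}(\mathbb{G})}$ as $\|\mathcal{R}_{g}f\|_{L^{2}(\mathbb{G})}$ multiplied by the factor
\[
2-\left\|\frac{\mathcal{R}_{g}f}{\|\mathcal{R}_{g}f\|_{L^{2}(\mathbb{G})}}+i\frac{\mathcal{C}f}{\|\mathcal{C}f\|_{L^{2}(\mathbb{G})}}\right\|^{2}_{L^{2}(\mathbb{G})}.
\]
Since the $L^{2}$-norm occurring here is nonnegative, this factor is at most $2$, and hence $\|\mathcal{C}f\|_{L^{2}(\mathbb{G})}\leq 2\|\mathcal{R}_{g}f\|_{L^{2}(\mathbb{G})}$, which is precisely the asserted relative bound.

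It then remains only to dispose of the cases excluded from \eqref{RsC} and to carry out the density step used already in Theorem \ref{RsRCrelation}. For $f\in C_{0}^{\infty}(\mathbb{G}\backslash\{0\})$ we have $\mathcal{C}f=f/|x|$ with $|x|>0$ on $\operatorname{supp}f$, so $\mathcal{C}f\equiv0$ forces $f\equiv0$, in which case the inequality is trivial; and if $f\not\equiv0$ then the hypotheses of \eqref{RsC} hold, since $\mathcal{R}_{g}f=-i\bigl(\mathcal{R}+\tfrac{Q-1}{2}\mathcal{C}\bigr)f\equiv0$ would mean that $f$ solves $\tfrac{df}{dr}=-\tfrac{Q-1}{2r}f$ along each ray and therefore vanishes, by compactness of the support away from the origin. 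Passing from $C_{0}^{\infty}(\mathbb{G}\backslash\{0\})$ to all $f\in D(\mathcal{R})\cap D(\mathcal{C})$ is then by density of $C_{0}^{\infty}(\mathbb{G}\backslash\{0\})$ in $L^{2}(\mathbb{G})$, exactly as in the proof of Theorem \ref{RsRCrelation}.

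There is no genuine obstacle in this argument: the entire substance of the corollary is already packaged in \eqref{RsC}, and the remaining points---the degenerate cases and the density extension---are routine bookkeeping. As a byproduct, one reads off from \eqref{RsC} that the constant $2$ is sharp, with equality $\|\mathcal{C}f\|_{L^{2}(\mathbb{G})}=2\|\mathcal{R}_{g}f\|_{L^{2}(\mathbb{G})}$ holding exactly when $\|\mathcal{C}f\|_{L^{2}(\mathbb{G})}\,\mathcal{R}_{g}f=-i\,\|\mathcal{R}_{g}f\|_{L^{2}(\mathbb{G})}\,\mathcal{C}f$, in parallel with the equality discussion in Corollary \ref{relations3}.
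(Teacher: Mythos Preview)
Your argument is correct and follows exactly the approach indicated in the paper, which simply states that the equality \eqref{RsC} implies the corollary. Your additional handling of the degenerate cases $\mathcal{C}f\equiv0$ or $\mathcal{R}_{g}f\equiv0$, the density extension, and the sharpness remark are welcome elaborations, but the core idea---bounding the bracketed factor in \eqref{RsC} above by $2$---is the same.
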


\medskip

{\bf Acknowledgement.} We are grateful to Professor Tohru Ozawa for providing
inspiration for the present work.

\end{document}